\DeclareMathAlphabet{\mathpzc}{OT1}{pzc}{m}{it}
\newcommand{\C}{\mathbb{C}}
\newcommand{\D}{\mathbb{D}}
\newcommand{\Hilb}{\mathbb{H}}
\newcommand{\M}{\mathbb{M}}
\newcommand{\N}{\mathbb{N}}
\newcommand{\R}{\mathbb{R}}
\newcommand{\orthoIdomtN}{\widetilde{\D}_t^N}
\newcommand{\IdomtN}{\D_t^N}
\newcommand{\Idomt}{\D_t}
\newcommand{\Mdomt}{\M_t}
\newcommand{\dom}{X}            
\newcommand{\tdom}{]0,t_1[}            
\newcommand{\sph}{Y}
\newcommand{\sphere}{{S^2}}
 \DeclareMathOperator{\id}{Id}
 \DeclareMathOperator{\range}{Range}
\newcommand{\operatordef}[1]{{\mathcal #1}}
\newcommand{\mnt}{\textswab{m}}
\newcommand{\oA}{\operatordef{A}}
\newcommand{\oC}{\operatordef{C}}
\newcommand{\oD}{\operatordef{D}}
\newcommand{\oE}{\operatordef{E}}
\newcommand{\oK}{\operatordef{K}}
\newcommand{\oL}{\operatordef{L}}
\newcommand{\oM}{\operatordef{M}}
\newcommand{\oP}{\operatordef{P}}
\newcommand{\oQ}{\operatordef{Q}}
\newcommand{\oS}{\operatordef{S}}
\newcommand{\vectordef}[1]{{\mathbf #1}}
\newcommand{\vI}{\vectordef{I}}
\newcommand{\vQ}{\vectordef{Q}}
\newcommand{\vY}{\vectordef{Y}}
\newcommand{\ve}{\vectordef{e}}
\newcommand{\scc}{\sigma}        
\newcommand{\abc}{\kappa}        
\newcommand{\I}{I}               
\newcommand{\B}{B}               
\newcommand{\Q}{Q}               
\newcommand{\vIN}{\vI_N}          
\newcommand{\devIN}{\tilde \I_N}               
\theoremstyle{plain}
\newtheorem{thm}{\bf Theorem}[section]
\newtheorem{lem}[thm]{\bf Lemma}
\newtheorem{prop}[thm]{\bf Proposition}
\newtheorem{defs}[thm]{\bf Definition}
\newtheorem*{prob*}{\bf Problem}
\theoremstyle{remark}
\newtheorem{rem}[thm]{\bf  Remark}
\numberwithin{equation}{section}
\newcommand{\secref}[1]{Section~\ref{#1}}
\newcommand{\apxref}[1]{Appendix~\ref{#1}}
\newcommand{\lemref}[1]{Lemma~\ref{#1}}
\newcommand{\defnref}[1]{Definition~\ref{#1}}
\newcommand{\figref}[1]{Figure~\ref{#1}}
\def\qdots{\mathinner{\mkern1mu\raise\p@
\vbox{\kern7\p@\hbox{.}}\mkern2mu
\raise4\p@\hbox{.}\mkern2mu\raise7\p@\hbox{.}\mkern1mu}}
\newcommand{\includegraphicschoice}[2]{\includegraphics[#1]{#2}}
\begin{document}

\title{Diffusive Corrections to $P_N$ Approximations}
\author{Matthias Sch\"afer\footnote{Fraunhofer-Institut f\"ur 
        Techno- und Wirtschaftsmathematik, Fraunhofer-Platz 1, 
        67663 Kaiserslautern, Germany, 
        {\tt matthias.schaefer@itwm.fhg.de}} 
        \and Martin Frank\footnote{Department of Mathematics, 
        University of Kaiserslautern, Erwin-Schr\"odinger-Strasse, 
        67663 Kaiserslautern, Germany, 
        {\tt frank@mathematik.uni-kl.de}}
        \and C. David Levermore\footnote{Department of Mathematics
        {\em and} Institute for Physical Science and Technology,
        University of Maryland, College Park, MD 20742, USA,
        {\tt lvrmr@math.umd.edu}}}

\maketitle
\abstract{In this paper, we investigate moment methods from a general 
point of view using an operator notation.  This theoretical approach 
lets us explore the moment closure problem in more detail.  This gives 
rise to a new idea, proposed in \cite{Levermore2005, Levermore2009}, 
of how to improve the well-known $P_N$ approximations.  We 
systematically develop a diffusive correction to the $P_N$ equations 
from the operator formulation --- the so-called $D_N$ approximation.  
We validate the new approach with numerical examples in one and two 
dimensions.}

\section{Introduction}
%
Developing simplified methods for the simulation of radiative transfer
requires taking into account the physical situation that will be 
analyzed.  There are two important limits: optically thick and 
optically thin media.  In optically thin media there are very few 
particles that interact with the radiation.  The distances that 
photons would typically travel before they are scattered or absorbed 
are therefore very long compared to the domain size.  On the other 
hand, in optically thick regimes those distances are very short 
compared to the domain size.
\begin{figure}[h!]
    \centering
    \subbottom[Optically thin medium.]
        {\label{cha_model:fig:optical_thin_medium}
        \epsfig{file=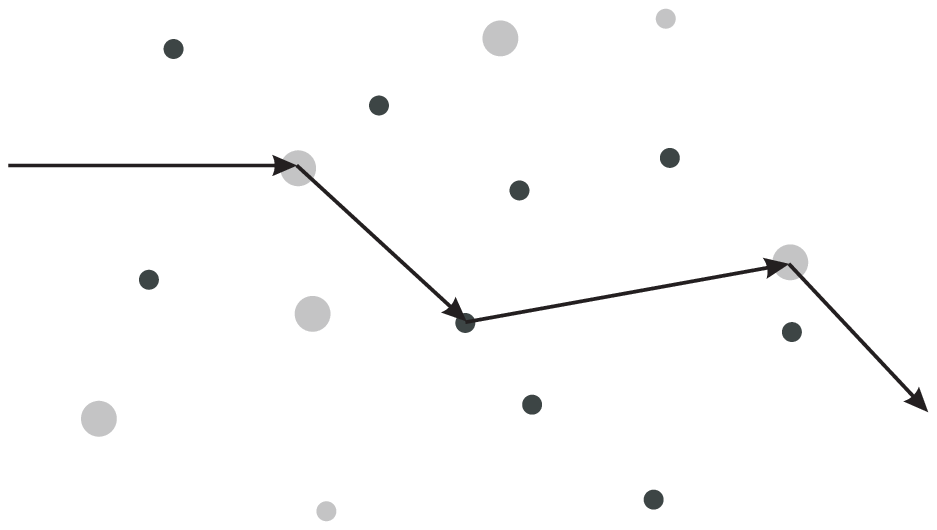,width=0.425\linewidth}}
    \subbottom[Optically thick medium.]
        {\label{cha_model:fig:optical_thick_medium}
        \epsfig{file=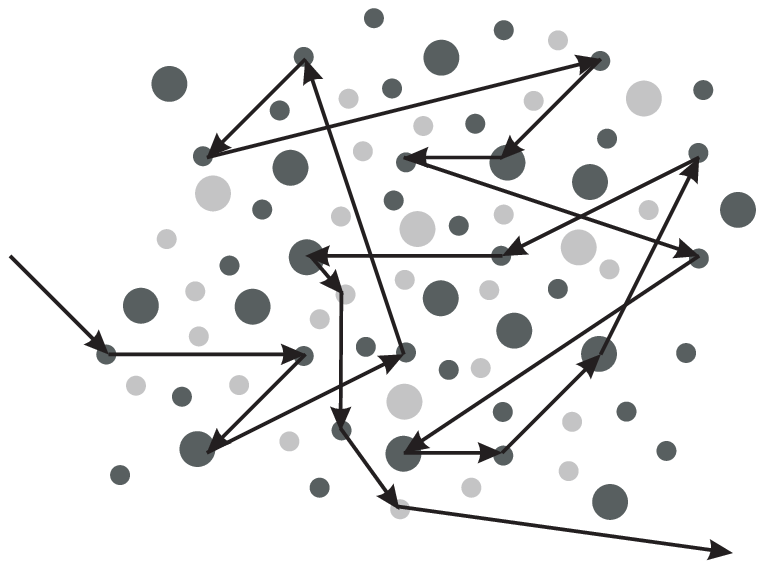,width=0.425\linewidth}}
    \caption{Path of a photon in optically thin and optically thick media.}
    \label{cha_model:fig:optical_thick_and_thin_media}
\end{figure}

   The different regimes can be characterized by the mean free paths 
$\chi$ for scattering and $\varepsilon$ for absorption. 
Large mean free paths represent an optically thin regime, small mean
free paths represent optically thick regimes.  One problem in this
characterization is, that many materials are optically thick in a
specific frequency range and optically thin in other ranges.
Additionally, there is a transition regime between the two.  This is 
the region of mean free paths between optically thin and thick media. 
In this regime, photons travel further than in optically thick 
situations, but not far enough as that the regime counts as optically 
thin.

   An example of an optically thin regime is radiation propagating in
vacuum.  Optically thick regimes can be found in glass cooling 
processes or combustion chambers.  There are also situations where all 
these regimes play a role.  For instance, during the reentry of a 
space craft into the atmosphere the regime goes from optically thin
(space) through a transition (higher atmosphere) into optically thick 
(lower atmosphere).

   Due to large mean free paths in optically thin regimes it is 
possible to track the trace of many single photons until they leave 
the computational domain or undergo absorption.  This is the basic 
idea of ray tracing and Monte Carlo methods.  If there are very few
scattering events, following the path of one photon is rather simple.  
These particle methods are often used in astrophysics where the 
physical conditions are usually in a way that these methods can be 
applied successfully.  For examples see 
\cite{Pascucci2004, Steinacker2006, Wolf2001, Wolf1999}.

   In optically thick regimes, following the path of a single photon 
is almost impossible because it undergoes too many scattering events
before it reaches its destination (leaving the computational domain
or being absorbed).  Therefore, other methods are used.  They are 
usually based on the diffusion approximation \cite{Rosseland}, e.g.\ 
the Simplified $P_N$ approximation 
\cite{Gel61, LarThoKlaSeaGot02, Pom93, Thoemmes} or flux-limited 
diffusion \cite{Lev84}.

   The transition regime is the region of optical depth that is 
located between optically thin and optically thick regimes.  Methods
that work well in optically thin media are computationally too
expensive in these regimes.  Methods that work well in optically thick 
media, on the other hand, give poor results for low order
approximations and have high computational costs if one increases
the order.  Therefore, in these regimes new methods have to be 
developed.

   These new approximations have to recover traditional reduced models
for small mean free paths.  Further, in the transition regime, they
have to be more accurate than the simplified models and should be
solvable more efficiently than the full kinetic approaches.

   The field of use for transition regime models can be found in
stand-alone solvers for problems that lie completely within the
optically thick and transition regime.  For problems where the order
of the mean free path also covers the optically thin regime, the new
approaches could be used in hybrid methods.

   In this work, our starting point will be moment methods.  These are 
usually derived based on the assumption that in highly scattering 
materials the photon distribution is driven toward a local equilibrium
Therefore, the radiative intensity distribution is almost isotropic at 
every point.  If this is the case, instead of treating the full 
intensity distribution, we can restrict our analysis to quantities 
that are averages of the directional distribution function over all
directions.  These quantities are e.g.\ the spectral energy
distribution, the spectral flux and the spectral pressure.  Averages 
of products of intensity distribution with directional test functions 
are called moments of the intensity function.  Often, one is only
interested in these averaged quantities.

   There is a variety of moment closures.  One of the first was the 
$P_N$ closure, which was developed by Chandrasekhar 
\cite{Chandrasekhar1944}.  Other approaches include the minimum
entropy closure \cite{AniPenSam91,DubFeu99,Min78,Lev84,Lev96}.
A recent approach applies methods from the study of dynamical systems
to moment closure \cite{FraSei09,SeiFra09}.

   In \secref{sec:devdec} we introduce the main concept of moment 
methods in an operator notation.  This theoretical approach lets us 
explore the moment closure problem in more detail.  This 
investigation gives rise to a new approach, proposed in 
\cite{Levermore2005, Levermore2009}, of how to improve the 
well-known $P_N$ approximations (\secref{sec:devapp}). 
Using the operator formulation, we systematically develop a diffusive 
correction to the $P_N$ equations --- the so-called $D_N$ 
approximation.  We do not treat boundaries here; they are considered
in \cite{FrankHauckLevermore2009}.  The partial differential 
equations behind the operator notation are developed for a simple 
example in \secref{cha_moment:sec:simplified_example_problem_1d}, 
before we develop the general $P_N$ and $D_N$ equations in 
Sections \ref{cha_moment:sec:analysis_operators_pn_approximation} and 
\ref{cha_moment:sec:analysis_operators_modified_pn_approximation}
respectively.  Numerical examples in one and two dimensions are then
shown in \secref{cha_numerics:sec:numerical_results}.  We find that
the solution of the $D_N$ equations is at least as accurate as the
solution of the $P_N$ equation of two orders higher. 

\section{Moment Models and Deviation Decomposition}
\label{sec:devdec}

   We consider radiation in a spatial domain $X$ with boundary 
$\partial X$ whose intensity $I(t,x,\Omega)$ at time $t\geq0$, 
position $x\in X$, and direction $\Omega\in S^2$ is governed by the 
frequency-averaged radiative transfer equation (RTE) 
\begin{multline}
    \label{cha_phy:eq:total_radiative_transfer_equation}
    \frac{1}{c} \partial_t \I(t,x,\Omega) 
    + \Omega \cdot \nabla_x \I(t,x,\Omega)
    + (\scc(x) + \abc(x)) \I(t,x,\Omega) 
\\
    = \frac{\scc(x)}{4\pi}
      \int_\sphere \Phi(x,\Omega \cdot \Omega') 
                   \I(t,x,\Omega') \, d\Omega' 
      + \abc(x) \B(T(x)) + \Q(t,x,\Omega) \,.
\end{multline}
Here $\scc(x)$ and $\abc(x)$ are the scattering and absorption 
coefficients, $\Phi(x,\Omega \cdot \Omega')>0$ is the scattering 
redistribution function, $\B(T(x))>0$ is the blackbody emission 
intensity at temperature $T(x)>0$, and $\Q(t,x,\Omega)$ is the
emission due to other sources.  The scattering redistribution 
function satisfies the normalization
\begin{equation}
  \label{cha_phy:eq:redistribution_normalization}
  \int_\sphere \Phi(x,\Omega \cdot \Omega') \, d\Omega = 1 \,. 
\end{equation}
The fact that $\scc(x)$, $\abc(x)$, and $\B(T(x))$ are 
independent of $\Omega$, while $\Phi(x,\Omega \cdot \Omega')$
depends on $\Omega \cdot \Omega'$ is consistent with a stationary, 
isotropic background medium.  The fact that these functions are 
independent of $t$ means that the heat capacity of this medium is
large.   

   We will express the different parts of equation
(\ref{cha_phy:eq:total_radiative_transfer_equation}) in terms of 
operators.

\begin{defs}
    \label{cha_model:def:operators_for_RTE}
    We define the operators
\begin{subequations}
    \label{cha_model:eq:operator_definitions_RTE}
    \begin{align}
        (\oA \I)(t,x,\Omega) 
        & = \Omega \cdot \nabla_x \I(t,x,\Omega) \,,
\\
        (\oS \I)(t,x,\Omega) 
        & = \frac{\scc(x)}{4\pi}
            \int_{S^2} \Phi(x,\Omega \cdot\Omega') \,
                       \I(t,x,\Omega') \, d\Omega' \,,
\\
        (\oK \I)(t,x,\Omega) 
        & = (\abc(x) + \scc(x)) \I(t,x,\Omega) 
            - (\oS \I)(t,x,\Omega) \,,
\\
        (\oL \I)(t,x,\Omega) 
        & = (\oA + \oK) \I(t,x,\Omega) \,,
\\
        \oQ(t,x,\Omega) 
        & = \abc(x) \B(T) + \Q(t,x,\Omega) \,.
        \end{align}
    \end{subequations}
\end{defs}
Here $\oA$ is advection, $\oS$ is scattering, $\oK$ is
total interaction due to scattering and absorption, and
$\oQ(t,x,\Omega)$ is total emission.  Using these operators 
the RTE reads
\begin{equation}
   \label{cha_model:eq:operator_version_RTE}
   \frac{1}{c} \partial_t \I(t,x,\Omega) + \oL \I(t,x,\Omega) 
   = \oQ(t,x,\Omega) \,.
\end{equation}
We impose homogeneous boundary conditions.  Let
\begin{equation}
    \Gamma = \partial \dom \times S^2 \,, 
    \quad \text{and} \quad
    \Gamma^\pm = \{ (x,\Omega) \in \Gamma \;:\; 
                    \pm n(x) \cdot \Omega > 0 \} \,,
\end{equation}
where $n$ is the outward unit normal vector.  Appropriate boundary 
conditions are
\begin{equation}
    \I(t,x,\Omega) = 0 \qquad 
    \forall t>0 \; \text{and} \; (x,\Omega) \in \Gamma^- \;.
\end{equation}
Together with the initial condition
\begin{equation}
    \I(0,x,\Omega) = I_0(x,\Omega) \,,
\end{equation}
we have a well-posed problem.  Under certain (physically reasonable) 
assumptions on the scattering and absorption coefficients $\scc$ and 
$\abc$, and for $\oQ \in L^2(\tdom \times \dom \times \sphere,\R)$, 
$\oQ(t,x,\Omega)\geq 0$ there exists a unique solution 
$\I \in \{ \I \in \Idomt \; : \; \I=0 \; \text{on} \; \Gamma^-\}$ 
(cf.~\cite{DautrayLions1993}), where
\begin{equation}
    \Idomt \subset L^2(\tdom \times \dom \times \sphere, \R).
\end{equation}
Moment methods have a long history \cite{Chandrasekhar1944,Davison}.
Nevertheless they are still used for solving radiative or neutron 
transfer problems in situations where computational time is of concern.  
The main idea of moment methods is to derive an approximation to the 
radiative intensity distribution with respect to its directional 
moments.  This relation is used to find an expression for the closure 
relation.  There are several ways to find such approximations.  The $P_N$ 
approach expresses the radiative intensity distribution as a series
expansion of spherical harmonics.  The minimum entropy approaches use
an expression with an exponential function.  But, independent of how 
these methods approximate the intensity, all of them have in common 
that the unknown coefficients in their expansion are somehow related 
to the directional moments of the intensity function.

   Moments are directional averages of the intensity distribution
multiplied with a test function that depends on the direction 
$\Omega$.  As test functions one can choose between several
possibilities.  We will use spherical harmonics, denoted by 
$\sph^k_l$, cf.\ Appendix \ref{apx:A}.
There are several reasons for choosing these functions.  First, they
form an orthonormal basis of the function space $\Hilb^m(\sphere,\C)$
 and therefore, after transformation also of $L^2(S^2, \R)$.  Hence, 
they can be used for a complete description of the dependence of the 
intensity distribution on the direction $\Omega$.  In other words, 
each intensity distribution can be represented by the series expansion
\begin{equation}
    \label{cha_moment:eq:infinite_series_expansion}
    \I(t,x,\Omega) 
    = \sum_{l=0}^\infty \sum_{k=-l}^l \vI^k_l(t,x) \sph^k_l(\Omega) \;,
\end{equation}
with moments
\begin{equation}
    \label{cha_moment:eq:definition_moments}
    \vI^k_l(t,x) = \int_{\sphere} \overline{\sph^k_l(\Omega)} 
                                  \I(t,x,\Omega) d\Omega \;.
\end{equation}
To deal with spherical harmonics and the related moments we define
\begin{defs}
    Moments of the radiative intensity are generated by the operator
    \begin{equation}
        \begin{split}
            \mnt^k_l \; : \; & \Idomt \to L^2(\tdom \times \dom, \C) \\
                             & \I(t,x,\Omega) \mapsto \int_{\sphere} \overline{\sph^k_l(\Omega)}
                               \I(t,x,\Omega) d\Omega \;,
        \end{split}
    \end{equation}
    and we write for the moment of order $l$ and degree $k$
    \begin{equation}
        \vI^k_l(t,x) = \mnt^k_l\left( \I(t,x,\Omega) \right) \;.
    \end{equation}
\end{defs}

   We allow the moments to be complex valued.  A real-valued 
approximation to the radiative intensity is obtained by taking the 
real part of these equations. The moments depend on time and space.  
For convenience we neglect this in the notation if it is clear to what 
we are referring and write $\vI^k_l = \vI^k_l(t,x)$.

   A vector $\vI$ of all moments belongs to
\begin{equation}
    \Mdomt = \left\{\vI = (\ldots,\vI^k_l,\ldots)^T \; : \; l\in \N_0, \; k\in\{-l,\ldots,l\}
    \right\} \subseteq l^2\left( L^2(\tdom \times \dom, \C) \right),
\end{equation}
where $l^2$ denotes the space of all square summable sequences.  We
introduce

\begin{defs}
    \label{cha_moment:def:moment_operators}
    We define the 
    ``Intensity to Moment'' operator
    \begin{equation}
        \begin{split}
            \oM \; : \; & \Idomt \to \Mdomt \\
                        & \I(t,x,\Omega) \mapsto \vI(t,x) \;.
        \end{split}
    \end{equation}
    The inverse transformation is given by the ``Moment to
    Intensity'' or ``Expansion'' operator
    \begin{equation}
        \begin{split}
            \oE \; : \; & \Mdomt \to \Idomt \\
                             & \vI(t,x) \mapsto \sum_{l=0}^\infty \sum_{k=-l}^l \vI^k_l(t,x)
                             \sph^k_l(\Omega)\;.
        \end{split}
    \end{equation}
\end{defs}
By their construction the operators $\oM$ and $\oE$ are linear,
bounded and continuous.  Furthermore, it is easy to see that both 
operators are bijective.

   So far we have replaced the unknown dependence in $\Omega$ by 
infinitely many unknown moments.  This does not help us to solve the 
RTE.  A usual approach to overcome this problem is to assume that 
finitely many moments are sufficient to describe the intensity 
function.   This reduces the amount of unknowns to a finite number 
and the problem can be handled much more easily.  Assuming that only 
the moments up to order $N$ are relevant gives an approximation 
$\I_N(t,x,\Omega)$ to $\I(t,x,\Omega)$
\begin{equation}
    \label{cha_moment:eq:RTE_approx_by_moment_upto_order_N}
    \I(t,x,\Omega) \approx \I_N(t,x,\Omega) 
                   = \sum_{l=0}^N \sum_{k=-l}^l \vI^k_l(t,x) \sph^k_l(\Omega)\;.
\end{equation}
and it holds
\begin{equation}
    \lim_{N\to\infty} \I_N(t,x,\Omega) = \I(t,x,\Omega).
\end{equation}
The finite set of moments can be represented by the vector
\begin{equation}
    \vIN = (\vI^{0}_0, \vI^{-1}_0,\vI^{0}_1,\ldots,\vI^{N-1}_N,\vI^{N}_N)^T
\end{equation}
and we define the set of restricted vectors of moments as
\begin{equation}
    \Mdomt^N =\left\{\vI_N \in \left(L^2(\tdom \times \dom, \C)\right)^{(N+1)^2}\right\} \;.
\end{equation}
Note that $\Mdomt^N$ is isomorphic to a subspace of $\Mdomt$.

   We restrict ourself to approximations of the radiative intensity of
odd orders.  There are several reasons for this.  First of all, even 
order approximations do not contain more information than odd order 
approaches.  Therefore, they only introduce more moments and are 
computationally more expensive without giving any advantage.  A second 
point for choosing just odd order approaches is given in 
\cite[Chapter~10, \S~3.2]{Davison}.  There it is shown, that boundary 
conditions for even order approximations are much less accurate than 
for odd order models.

Analogous to \defnref{cha_moment:def:moment_operators}, we define
\begin{defs}
    \label{cha_moment:def:restricted_moment_operators}
    The
    ``restricted Intensity to Moment'' operator is
    \begin{equation}
        \begin{split}
            \oM_N \; : \; & \Idomt \to \Mdomt^N \\
                        & \I(t,x,\Omega) \mapsto \vIN(t,x) \;.
        \end{split}
    \end{equation}
    The inverse transformation is given by the ``restricted Moment to
    Intensity'' operator
    \begin{equation}
        \begin{split}
            \oE_N \; : \; & \Mdomt^N \to \IdomtN \subset \Idomt \\
                             & \vIN(t,x) \mapsto \sum_{l=0}^N \sum_{k=-l}^l \vI^k_l(t,x)\sph^k_l(\Omega)=\I_N(t,x,\Omega)\;.
        \end{split}
    \end{equation}
    with
    \begin{equation}
        \IdomtN = \left\{\I_N \in \Idomt \; : \; \I_N(t,x,\Omega) = \sum_{l=0}^N \sum_{k=-l}^l \vI^k_l(t,x) \sph^k_l(\Omega)\right\} \;.
    \end{equation}
\end{defs}

The only difference between the operators $\oE$ and $\oE_N$ is the
restriction on the domain and the range. The restriction of
$\range(\oE_N)$ on $\IdomtN$ ensures  that the injectivity is
inherited from $\oE$. Therefore, $\oE_N$ is still bijective.
$\IdomtN$ is the subspace of $\Idomt$ that contains only those
intensity functions, which can be represented by moments up to order
$N$. Due to the bijectivity of $\oE_N$, working with either the set
of moments up to order $N$ or the approximated intensity
distribution $\I_N(t,x,\Omega)$ is equivalent.

\begin{lem}
    The combined operator
    \begin{equation}
        \begin{split}
            \oP_N \; : \; & \Idomt \to \Idomt \\
                             & \I(t,x, \Omega) \mapsto \oE_N \oM_N \I(t,x,\Omega)
        \end{split}
    \end{equation}
    is a projection.
\end{lem}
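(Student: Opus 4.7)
The plan is to show $\oP_N^2 = \oP_N$, since idempotence is the defining property of a projection (the paper does not require self-adjointness here). The key fact I would rely on is that $\oM_N$ and $\oE_N$ are inverse to each other on $\Mdomt^N$, i.e.\ $\oM_N \circ \oE_N = \id_{\Mdomt^N}$. Once this is in hand, the computation is immediate:
\begin{equation*}
    \oP_N^2 = \oE_N \oM_N \oE_N \oM_N = \oE_N (\oM_N \oE_N) \oM_N = \oE_N \, \id_{\Mdomt^N} \, \oM_N = \oE_N \oM_N = \oP_N \,.
\end{equation*}

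To justify $\oM_N \oE_N = \id_{\Mdomt^N}$, I would take an arbitrary $\vI_N \in \Mdomt^N$ and compute the $(l,k)$-component of $\oM_N \oE_N \vI_N$ using the definitions:
\begin{equation*}
    \mnt^k_l\bigl(\oE_N \vI_N\bigr)(t,x)
    = \int_{\sphere} \overline{\sph^k_l(\Omega)}
      \sum_{l'=0}^N \sum_{k'=-l'}^{l'} \vI^{k'}_{l'}(t,x)\, \sph^{k'}_{l'}(\Omega)\, d\Omega \,.
\end{equation*}
Interchanging the finite sum with the integral and invoking the orthonormality relation $\int_\sphere \overline{\sph^k_l(\Omega)} \sph^{k'}_{l'}(\Omega)\, d\Omega = \delta_{ll'}\delta_{kk'}$ from Appendix~A collapses the double sum to $\vI^k_l(t,x)$, establishing that each component of $\oM_N \oE_N \vI_N$ equals the corresponding component of $\vI_N$.

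There is essentially no obstacle here; the only subtlety worth noting is that one needs the identity $\oM_N \oE_N = \id$ on the restricted space $\Mdomt^N$, not the (false on all of $\Idomt$) identity $\oE_N \oM_N = \id$. In fact, the statement $\oE_N \oM_N = \id$ does hold on the subspace $\IdomtN$, which is precisely the range of $\oP_N$, and this is what makes $\oP_N$ a projection onto $\IdomtN$ along the complement consisting of intensities whose spherical-harmonic expansion has only components of order $l>N$. I would close by remarking that $\oP_N$ is the orthogonal projection in $\LL$ onto $\IdomtN$, since the spherical harmonics form an orthonormal basis, though this additional observation is not required for the lemma itself.
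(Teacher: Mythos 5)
Your proof is correct and follows essentially the same route as the paper's: the paper likewise establishes idempotence by expanding the intensity in spherical harmonics, applying the operators from the definition, and invoking orthonormality, which is exactly what your identity $\oM_N \oE_N = \id_{\Mdomt^N}$ encapsulates. Your version merely spells out the details that the paper leaves as a one-line sketch.
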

\begin{proof}
    We have to show that $\oP_N^2 \I(\Omega) = \oP_N \I(\Omega) =
    \I_N(\Omega)$ holds. Writing down the intensity function as
    series expansion, applying the operators as defined in
    \defnref{cha_moment:def:restricted_moment_operators} and
    additionally using the orthonormality property of the spherical harmonics
    gives the result.
\end{proof}

By using $\oP_N$, we can define the projection onto the orthogonal
complement $\orthoIdomtN \perp \IdomtN$ (with $\Idomt = \IdomtN
\oplus \orthoIdomtN$) by $\tilde \oP_N = \id - \oP_N$. This gives
rise to

\begin{defs}
    The radiative intensity can be decomposed into a component that
    can be described by finitely many moments $\I_N(\Omega)$ and a deviation
    \begin{equation}
        \devIN(\Omega) = \tilde \oP_N \I(\Omega) \;.
    \end{equation}
    This splitting is called deviation decomposition.
\end{defs}

  We call $\devIN(\Omega)$ the deviation because it is the difference
between the full radiative intensity $\I(\Omega)$ and the
component $\I_N(\Omega)$ that can be represented by finitely many
moments:
\begin{equation}
    \begin{split}
        \I(\Omega) & = \oP_N \I(\Omega) + (\id - \oP_N)\I(\Omega) \\
                   & = \I_N(\Omega) + \tilde \oP_N \I(\Omega) \\
                   & =\I_N(\Omega) + \devIN(\Omega) \;.
    \end{split}
\end{equation}

\begin{lem}
    \label{cha_moment:lem:general_decoupled_operator_RTE}
    The RTE can be decomposed into an equivalent coupled system 
    of finitely many moment equations and one deviation equation
    \begin{subequations}
        \label{cha_moment:eq:general_decoupled_operator_RTE}
        \begin{align}
            \label{cha_moment:eq:general_moment_equation_system}
            \frac{1}{c} \partial_t \vIN + \oM_N \oL \oE_N \vIN + \oM_N \oL \devIN &=\vQ_N \;,\\
            \label{cha_moment:eq:general_deviation_equation}
            \frac{1}{c} \partial_t \devIN(\Omega) + \tilde \oP_N \oL \oE_N \vIN + \tilde \oP_N \oL \devIN &=\tilde \oQ_N \;.
        \end{align}
    \end{subequations}
\end{lem}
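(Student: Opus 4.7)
The plan is to derive the two equations by applying the complementary projections $\oP_N = \oE_N \oM_N$ and $\tilde\oP_N = \id - \oP_N$ separately to the RTE (\ref{cha_model:eq:operator_version_RTE}), and then to observe that the pair recombines to the original equation, establishing equivalence in both directions.

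First I substitute the decomposition $\I = \oP_N \I + \tilde\oP_N \I = \oE_N \vIN + \devIN$ into the RTE. By linearity,
\begin{equation*}
\tfrac{1}{c}\partial_t (\oE_N \vIN) + \tfrac{1}{c}\partial_t \devIN + \oL \oE_N \vIN + \oL \devIN = \oQ.
\end{equation*}
To obtain (\ref{cha_moment:eq:general_moment_equation_system}) I apply $\oM_N$ to this identity. Since $\oM_N$ acts only on the angular variable $\Omega$, it commutes with $\partial_t$, and two orthogonality identities collapse the time-derivative terms: $\oM_N \oE_N = \id$ on $\Mdomt^N$ (hence $\oM_N \partial_t \oE_N \vIN = \partial_t \vIN$), and $\oM_N \devIN = 0$, which holds because $\devIN \in \orthoIdomtN$ is by construction orthogonal to every $\sph^k_l$ with $l \le N$. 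Setting $\vQ_N = \oM_N \oQ$ yields (\ref{cha_moment:eq:general_moment_equation_system}).

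For (\ref{cha_moment:eq:general_deviation_equation}) I apply $\tilde\oP_N$. The dual identities $\tilde\oP_N \oE_N \vIN = 0$ (since $\oE_N \vIN \in \IdomtN$) and $\tilde\oP_N \devIN = \devIN$ (since $\devIN \in \orthoIdomtN$) dispose of the two time-derivative contributions as before, and with $\tilde\oQ_N := \tilde\oP_N \oQ$ equation (\ref{cha_moment:eq:general_deviation_equation}) follows. The reverse implication is immediate: applying $\oE_N$ to (\ref{cha_moment:eq:general_moment_equation_system}) and adding (\ref{cha_moment:eq:general_deviation_equation}) recovers the RTE, using $\oE_N \oM_N + \tilde\oP_N = \oP_N + \tilde\oP_N = \id$.

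The main obstacle, though purely bookkeeping, is the careful verification of the orthogonality identities $\oM_N \devIN = 0$ and $\tilde\oP_N \oE_N \vIN = 0$; both rely on the direct-sum decomposition $\Idomt = \IdomtN \oplus \orthoIdomtN$ induced by the projection property of $\oP_N$ established in the preceding lemma. It is worth emphasizing that no analogous simplification can be applied to the interaction terms $\oM_N \oL \devIN$ and $\tilde\oP_N \oL \oE_N \vIN$: the advection part of $\oL$ typically couples $\IdomtN$ with $\orthoIdomtN$ by shifting the order of spherical harmonics, and this coupling is precisely the moment-closure problem that the subsequent $D_N$ construction is designed to address.
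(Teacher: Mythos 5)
Your proposal is correct and follows essentially the same route as the paper: substitute the decomposition $\I = \oE_N\vIN + \devIN$ (and $\oQ = \oE_N\vQ_N + \tilde\oQ_N$) into the RTE, then apply $\oM_N$ and $\tilde\oP_N$ in turn, using the orthogonality identities $\oM_N\devIN = 0$, $\oM_N\tilde\oQ_N = 0$, $\tilde\oP_N\oE_N\vIN = 0$, and $\tilde\oP_N\oE_N\vQ_N = 0$ to collapse the cross terms. Your added observation that $\oP_N + \tilde\oP_N = \id$ recovers the original RTE, making the equivalence explicit, is a small but worthwhile refinement the paper leaves implicit.
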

\begin{proof}
    We start by decomposing the intensity distribution and source terms
    into one component that depends on a finite set of moments and
    second part that is the deviation
    \begin{align}
        \I  &= \ \, \I_N  + \devIN \ = \oE_N \vIN + \devIN  \;.\\
        \oQ  &= \oQ_N  + \tilde \oQ_N = \oE_N \vQ_N + \tilde \oQ_N  \;.
    \end{align}
    By $\vQ_N$ we denote the set of moments generated from the
    source term and $\tilde \oQ_N$ represents the deviation
    part
    \begin{equation}
        \vQ_N=(\mnt^0_0\oQ ,\mnt^{-1}_1\oQ ,\ldots,\mnt^N_N\oQ ,)^T
        \quad \text{and} \quad
        \tilde \oQ_N  = \tilde \oP_N \oQ  \;.
    \end{equation}
    Due to the linearity of the operators, this leads to
    \begin{equation}
        \label{cha_moment:eq:RTE_with_decomposed_intensity}
        \frac{1}{c} \partial_t \left( \oE_N \vIN + \devIN  \right)
          + \oL \oE_N \vIN + \oL \devIN  = \oE_N \vQ_N + \tilde \oQ_N  \;.
    \end{equation}
    Applying the operator $\oM_N$ to this equation gives
    \eqref{cha_moment:eq:general_moment_equation_system}. 
    We have the orthogonality relations $\I_N \bot
    \tilde \I_N$ and $\oQ_N \bot \tilde \oQ_N$ that lead to
    \begin{equation}
        \oM_N(\devIN ) = 0, \qquad \oM_N(\tilde \oQ_N ) = 0 \;.
    \end{equation}
    Using the operator $\tilde \oP_N$ with
    \eqref{cha_moment:eq:RTE_with_decomposed_intensity} leads to
    \eqref{cha_moment:eq:general_deviation_equation}. This is true
    since
    \begin{equation}
        \tilde \oP_N(\oE_N \vIN) = 0, \qquad \tilde \oP_N(\oE_N \vQ_N) = 0 \;.
    \end{equation}
\end{proof}

   We transformed the problem of solving the RTE from solving one
equation in six dimensions to a different problem with a system of
equations for the moments in four dimensions and additionally one
coupled deviation equation that still is six dimensional.  So far we
have not gained anything.  But if we could find a good approximation 
to the deviation, system 
\eqref{cha_moment:eq:general_decoupled_operator_RTE} would simplify
to just the moment equations where the dependence on the deviation
could be treated explicitly.

\section{Closure Approximations}
\label{sec:devapp}

   The classical $P_N$ approximation is obtained by setting the 
deviation to zero
\begin{equation}
    \devIN = 0 \;.
\end{equation}
This is equivalent to assuming that the radiative intensity 
distribution can be written as a finite sum of spherical harmonics.
It is the simplest closure approximation one can make.  Of course, 
in reality this is usually not true and thus, this assumption defines 
the limits of the $P_N$ approach.  The $P_N$ equations in operator 
notation are
\begin{equation}
        \label{cha_moment:eq:operator_pn_equation}
        \frac{1}{c} \partial_t \vIN + \oM_N \oL \oE_N \vIN =\vQ_N.
 \end{equation}
In this work we derive a better approximation of $\devIN$ from the 
deviation equation.  Starting from 
\eqref{cha_moment:eq:general_deviation_equation}, we first assume 
that we can drop the time derivative of the deviation, whereby 
\begin{equation}
    \tilde \oP_N \oL \oE_N \vIN + \tilde \oP_N \oL \devIN =
    \tilde \oQ_N \;.
\end{equation}
Then using the definition of the operator $\oL = \oA + \oK$ and 
assuming the invariance of $\oK$ under the projection $\tilde \oP_N$ 
we obtain
\begin{equation}
    \left(\tilde \oP_N \oA + \oK \right) \devIN 
    = \tilde \oQ_N - \tilde \oP_N \oL \oE_N \vIN \;.
\end{equation}
The invariance assumption is justified if the scattering kernel can
be expanded in terms of spherical harmonics. For the second component on
the right-hand side then holds
\begin{equation}
    \begin{split}
    \tilde \oP_N \oL \oE_N \vIN 
    &= \tilde \oP_N \oA \oE_N \vIN + \oK \tilde \oP_N \oE_N \vIN\\
    & = \tilde \oP_N \oA \oE_N \vIN + \oK \left(\id - \oP_N\right) \oE_N \vIN\\
    & = \tilde \oP_N \oA \oE_N \vIN + \oK \left(\I_N - \I_N\right) 
      = \tilde \oP_N \oA \oE_N\vIN \;.
    \end{split}
\end{equation}
We thereby obtain
\begin{equation}
    \label{cha_moment:ed:formal_equation_for_deviation}
    \devIN = \left(\tilde \oP_N \oA + \oK \right)^{-1}
                     \left(\tilde \oQ_N - \tilde \oP_N \oA \oE_N \vIN\right)
\end{equation}
which is a formal expression for the deviation.  Of course, computing the 
inverse operator $\left(\tilde \oP_N \oA + \oK \right)^{-1}$ is still not 
easier than solving the original transport equation.

   Starting from \eqref{cha_moment:ed:formal_equation_for_deviation}
and using a reformulation gives
\begin{equation}
    \label{cha_moment:ed:formal_equation_for_deviation_transformed_for_neumann}
    \devIN(\Omega) = \left(\id - \left( -\oK^{-1} \tilde \oP_N \oA \right) \right)^{-1}\oK^{-1} 
                     \left(\tilde \oQ_N - \tilde \oP_N \oA \oE_N \vIN\right) \;.
\end{equation}
Recall that we are interested in methods for the transition regime.
Then the collisional physics are more important than the free
transport of the photons.  Hence, we can assume that the $\tilde
\oP_N \oA$ component in
\eqref{cha_moment:ed:formal_equation_for_deviation} which represents
free transport is significantly smaller than the $\oK$ component
which describes absorption and scattering.  We therefore formally use 
Neumann's series to obtain
\begin{equation}
    \label{cha_moment:ed:formal_neumann_series}
    \devIN(\Omega) = \sum_{j=0}^\infty \left( -\oK^{-1} \tilde \oP_N \oA \right)^{j}
    \oK^{-1} \left( \tilde \oQ_N(\Omega) - \tilde \oP_N \oA \oE_N \vIN \right)\;.
\end{equation}
Of course, the operator $\oA$ is not bounded and in general this
series will not converge.  Nevertheless, truncating the expansion
after terms of some order gives an approximation to the deviation
that can be used in the system of moment equations
\eqref{cha_moment:eq:general_moment_equation_system} to improve the
results compared to the $P_N$ method. In this work we will deal with
the approximation that is obtained by taking only the first term of
\eqref{cha_moment:ed:formal_neumann_series}. This leads to
\begin{equation}
    \label{cha_moment:eq:first_order_deviation_equation}
    \devIN = \oK^{-1} \left( \tilde \oQ_N - \tilde \oP_N \oA \oE_N \vIN \right) \;.
\end{equation}
Additionally, we remark that due to the orthogonality of the two
projections $\oP_N \I$ and $\tilde \oP_N\I$ we
have
\begin{equation}
    \oM_N \oK \left( \oK^{-1} \left( \tilde \oQ_N - \tilde \oP_N \oA \oE_N \vIN \right)\right)=0\;.
\end{equation}
Using the deviation approximation
\eqref{cha_moment:eq:first_order_deviation_equation} in
\eqref{cha_moment:eq:general_moment_equation_system} finally leads to the
$D_N$ equations in operator notation
\begin{equation}
        \label{cha_moment:eq:modified_pn_operator_equation}
        \frac{1}{c} \partial_t \vIN + \oM_N \oL \oE_N \vIN
          + \oM_N \oA \left( \oK^{-1} \left( \tilde \oQ_N - \tilde \oP_N \oA \oE_N \vIN \right) \right)
          =\vQ_N.
\end{equation}
Instead of computing the inverse of the operator $\left(\tilde \oP_N
\oA + \oK \right)^{-1}$ we now only have to express the inverse of
the combined absorption and scattering operator $\oK^{-1}$. But
this can be done in a straightforward way, cf.~\ref{apx:B}.

The approximation of the deviation 
can be extended to higher orders. Truncating the series after terms
of order zero gives an additional term with second derivatives in
the moment equations. This is obvious since the operator $\oA$ is
applied twice. Using truncations after terms of higher order
leads to deviations of higher orders and therefore, makes the
equations much more complicated.

\section{An Example}
\label{cha_moment:sec:simplified_example_problem_1d}
In the next section, we are going to develop explicit expressions for the 
introduced operators for radiative transfer in $3D$.  But before we do this, 
to get an understanding how all these operators act on the equations, we take
a closer look on a rather simple problem.  We assume a one-dimensional slab 
geometry, i.e.\  the analyzed radiation field is homogeneous in two 
directions $x_1$ and $x_2$ and also rotationally invariant with respect to 
the axis of propagation.  Then the angular dependence can be expressed in one 
variable $\mu\in[-1,1]$.  For moments of the radiative intensity it holds
\begin{equation}
  \begin{split}
    \int_{\sphere} \overline{\sph^m_n(\Omega)} & \I(\Omega) d\Omega
      = \int_0^{2\pi} \int_{-1}^1 \overline{\sph^m_n(\varphi,\mu)} 
                                  \I(\varphi,\mu) d\mu d\varphi 
\\
    & = (-1)^m \sqrt{\frac{2n+1}{4\pi} \frac{(n-m)!}{(n+m)!}} 
        \left( \int_0^{2\pi} e^{-im\varphi} d\varphi \right) 
        \left( \int_{-1}^1 P^{m}_{n}(\mu) \I(\mu) d\mu \right) \;.
  \end{split}
\end{equation}
But the first integral is zero for every $m\neq 0$ and the set of
relevant moments simplifies to
\begin{equation}
  \label{cha_moment:eq:relevant_moments_1d}
  \vI^m_n = \int_{\sphere} \overline{\sph^m_n(\Omega)} \I(\Omega) d\Omega 
  = \begin{cases}
        \vI^0_n & \text{for} \; m = 0\;, \\
        0 & \text{otherwise} \;.
    \end{cases}
\end{equation}
Due to the homogeneous setup, all derivatives in direction of $x_1$
and $x_2$ vanish and the radiative transfer equation becomes
\begin{equation}
    \label{cha_phy:eq:radiative_transfer_equation_1d}
    \frac{1}{c} \partial_t \I(t,x,\mu) + \mu \partial_{x_3} \I(t,x,\mu)
    + (\scc + \abc) \I(t,x,\mu) - (\oS \I)(t,x,\mu) 
    = \oQ(t,x,\mu) \quad x \in \dom \subset \R \;.
\end{equation}
Moment equations can be generated by applying the operator
$\mnt^0_l$ to this equation:
\begin{equation}
    \frac{1}{c} \partial_t \vI^0_l 
    + \partial_{x_3}
      \left( h_3(0,l) \vI^0_{l+1} + l_3(0,l)\vI^0_{l-1} \right)
    + \tilde \scc_l \vI^0_l = \vQ^0_l \;,
\end{equation}
where
\begin{equation}
  l_3(0,l) = \frac{l}{\sqrt{4l^2-1}} \,, \quad 
  h_3(0,l) = \frac{l+1}{\sqrt{(2l+1)(2l+3)}} \,, \quad 
  \text{and} \quad 
  \tilde \scc_l = \scc + \abc - \frac{\scc}{2}\scc_l \,,
\end{equation}
with $\sigma_l$ being the moments of the scattering kernel, 
cf.\ Appendix \ref{apx:B}.
If in addition we assume an isotropic source, all moments of the
source term of order unequal to zero vanish ($\vQ_l = 0$ for $l>0$).
It can be easily checked, that for the $P_N$ approach this leads to
the following set of equations
\begin{subequations}
    \label{cha_moment:eq:simplified_1d_pn_system}
    \begin{align}
        \frac{1}{c}\partial_t \vI^0_0 + h_3(0,0) \partial_{x_3} \vI^0_{1}
        + \tilde \scc_0 \vI^0_0 = \vQ^0_0 \;, & \\
        \frac{1}{c}\partial_t \vI^0_l + \partial_{x_3}
        h_3(0,l) \partial_{x_3} \vI^0_{l+1} + l_3(0,l) \partial_{x_3} \vI^0_{l-1}
        + \tilde \scc_l \vI^0_l = 0 \;, & \quad l\in\{2,\ldots,N-1 \}\\
        \frac{1}{c}\partial_t \vI^0_N + l_3(0,N) \partial_{x_3} \vI^0_{N-1}
        + \tilde \scc_N \vI^0_N = 0 \;.
    \end{align}
\end{subequations}
For the $D_N$ approach, we have to evaluate the expression
\begin{equation}
    \oM_N \oA \left( \oK^{-1} \left( \tilde \oQ_N - \tilde \oP_N \oA \oE_N \vIN \right) \right) \;.
\end{equation}
Assuming an isotropic source $\oQ$ gives $\tilde \oQ_N(\Omega)=0$.
The moment to intensity operator becomes
\begin{equation}
    \oE_N \vIN = \sum_{l=0}^N \sph^0_l \vI^0_l \;,
\end{equation}
and by using the projection property of $\tilde \oP_N$ we get
\begin{equation}
    \begin{split}
        \tilde \oP_N \oA \oE_N \vIN
          &= \tilde \oP_N \sum_{l=0}^N \Omega_3 \sph^0_l(\Omega) \partial_{x_3} \vI^0_l \\
          &= \tilde \oP_N \sum_{l=0}^N \left( h_3(0,l) \sph^0_{l+1}(\Omega) + l_3(0,l) \sph^0_{l-1}(\Omega) \right) \partial_{x_3} \vI^0_l \\
          &= h_3(0,N) \sph^0_{N+1}(\Omega) \partial_{x_3} \vI^0_N \;.
    \end{split}
\end{equation}
Applying $\oA$ and $\oK^{-1}$ to this expression yields
\begin{equation}
    \begin{split}
        \mnt^0_n \oA \oK^{-1} \tilde \oP_N \oA \oE_N \vIN
          &= \mnt^0_n \left( \partial_{x_3} \left(\frac{1}{\tilde \scc_{N+1}}
             h_3(0,N) \Omega_3 \sph^0_{N+1}(\Omega) \partial_{x_3} \vI^0_N \right) \right) \\
          &= \mnt^0_n \biggl( \partial_{x_3} \biggl(\frac{1}{\tilde \scc_{N+1}}
             h_3(0,N) \bigl( h_3(0,N+1)\sph^0_{N+2}(\Omega) \\
          & \qquad \qquad \qquad \qquad \qquad \qquad + l_3(0,N+1)\sph^0_{N}(\Omega) \bigr) \partial_{x_3} \vI^0_N \biggr) \biggr) \\
          &= \begin{cases}
                \partial_{x_3} \left(\frac{1}{\tilde \scc_{N+1}} h_3(0,N) l_3(0,N+1) \partial_{x_3} \vI^0_N \right) & \text{for} \quad n=N \;, \\
                0 & \text{otherwise} \;.
             \end{cases}
    \end{split}
\end{equation}
From thes calculations we see that the $D_N$ equations
differ from the $P_N$ equations
\eqref{cha_moment:eq:simplified_1d_pn_system} only in the the
equations for the moment of order $N$.  The $D_N$ system
finally reads
\begin{subequations}
    \label{cha_moment:eq:simplified_1d_modified_pn_system}
    \begin{align}
        \frac{1}{c}\partial_t \vI^0_0 + h_3(0,0) \partial_{x_3} \vI^0_{1}
        + \tilde \scc_0 \vI^0_0 = \vQ^0_0 \;, & \\
        \frac{1}{c}\partial_t \vI^0_l + \partial_{x_3}
        h_3(0,l) \partial_{x_3} \vI^0_{l+1} + l_3(0,l) \partial_{x_3} \vI^0_{l-1}
        + \tilde \scc_l \vI^0_l = 0 \;, & \\
        \frac{1}{c}\partial_t \vI^0_N + l_3(0,N) \partial_{x_3} \vI^0_{N-1}
        - \partial_{x_3} \left(\frac{1}{\tilde \scc_{N+1}} h_3(0,N) l_3(0,N+1) \partial_{x_3} \vI^0_N \right)
        + \tilde \scc_N \vI^0_N = 0 \;.
    \end{align}
\end{subequations}
The correction term of the $D_N$ equation is of diffusive nature and 
thus adds a stabilizing component to the $P_N$ equations.

\begin{rem}
    \label{cha_moment:rem:interpret_difference_pn_modified_pn_1d}
    The additional term also can be interpreted in a different way. If
    we take the moment equation of order $N+1$, neglect moments of order
    $N+2$ and the time derivative and solve this equation for the moment
    $\vI^0_{N+1}$ we get
    \begin{equation}
        \label{cha_moment:eq:simplified_interpretation_modified_pn}
        \vI^0_{N+1} = -\frac{1}{\tilde \scc_{N+1}} l_3(0,N+1) \partial_{x_3} \vI^0_{N} \;.
    \end{equation}
    Inserting this term as approximation for $\vI^0_{N+1}$ into the
    equation for the moment $\vI^0_N$ gives exactly the equation for the
    moment of order $N$ in the $D_N$ equations.  Therefore, at least in 1D 
    there is a simple way how the new model equations can be derived.
\end{rem}
    
\section{Explicit Operators for $P_N$}
\label{cha_moment:sec:analysis_operators_pn_approximation}
In the previous sections, we developed an operator approach to solve 
the RTE by moment methods.  Now we take a more detailed look at these 
operators and analyze their structure.  Also the results presented 
here are relevant to develop numerical methods for solving the RTE 
with the help of $P_N$ and $D_N$ equations.  Most of the notation 
used here is introduced in the Appendix.

In this section we will assume that the source term $\oQ$ is
isotropic and thus does not depend on the direction of the radiation
$\Omega$. Then, due to the orthogonality of the spherical harmonics,
all directional moments of $\oQ(t,x)$ of order equal or higher than
one vanish and we get
\begin{equation}
    \oQ(t, x) = \sum_{l=0}^\infty \sum_{k=-l}^l \sph^k_l(\Omega) \mnt^k_l \oQ(t,x)
           = \frac{1}{\sqrt{4\pi}} \vQ^0_0(t,x) \;.
\end{equation}
For the vector of moments of the source and its deviation this leads
to
\begin{equation}
    \label{cha_moment:eq:properties_moments_isotropic_source}
    \vQ_N(t,x) = \left( \sqrt{4\pi}\left( \abc \B(T) + \Q(t,x) \right),0,0,\ldots \right)^T \qquad \text{and} \qquad \tilde
    \oQ_N(t,x,\Omega) = 0 \;.
\end{equation}
Next, we will analyze the expression
\begin{equation}
    \oM_N \oL \oE_N \vI_N = \oM_N \oA \oE_N \vI_N + \oM_N \oK \oE_N \vI_N
\end{equation}
which appears in the $P_N$ and $D_N$ approaches.  The analysis will 
be performed separately for the transport ($\oA$) and the 
scattering/absorption ($\oK$) component. We will start with the
transport term $\oM_N \oA \oE_N \vI_N$.

By analyzing one single moment equation of order $n$ and degree $m$
we get
\begin{equation}
    \begin{split}
        \mnt^m_n \oA \oE_N \vI_N
          &= \mnt^m_n \left(\sum_{r=1}^3 \partial_{x_r} \Omega_r \I_N(\Omega)\right)
          = \mnt^m_n \left(\sum_{r=1}^3 \partial_{x_r} \Omega_r \left( \sum_{l=0}^N \sum_{k=-l}^l \sph^k_l(\Omega) \vI^k_l\right)\right) \\
          &= \sum_{r=1}^3 \partial_{x_r} \left( \sum_{l=0}^N \sum_{k=-l}^l \left(\mnt^m_n \Omega_r \sph^k_l(\Omega)\right) \vI^k_l\right) \\
    \end{split}
\end{equation}
The inner sum can be written as a scalar product
\begin{equation}
    \sum_{l=0}^N \sum_{k=-l}^l \left(\mnt^m_n \Omega_r \sph^k_l(\Omega)\right)\vI^k_l
    = \left\langle \mnt^m_n \Omega_r \vY_N , \vI_N \right\rangle
    = \left( \mnt^m_n \Omega_r \vY_N\right)^T  \vI_N  \;,
\end{equation}
where $\vY_N$ denotes the vector of spherical harmonics as
introduced in \defnref{apx_sph:def:vector_of_spherical_harmonics}.

Expressing the first component of this scalar product with the help
of relation \eqref{apx_sph:eq:omega_times_sph_matrix_notation} and
orthogonality relations for spherical harmonics
gives for one
component of the vector
\begin{equation}
    \begin{split}
        \mnt^m_n \Omega_r \sph^i_j
          &= \mnt^m_n \left( \gamma_r \left( \left(\ve^{N}_{(i,j)}\right)^T \hat L_{x_r}^{N} \vY_N
                + \left(\ve^{N+1}_{(i,j)}\right)^T \hat U_{x_r}^{N+1} \vY_{N+1} \right) \right)\\
          &= \gamma_r \left( \left(\ve^{N}_{(i,j)}\right)^T \hat L_{x_r}^{N} \mnt^m_n \vY_N
                + \left(\ve^{N+1}_{(i,j)}\right)^T \hat U_{x_r}^{N+1} \mnt^m_n \vY_{N+1} \right)\\
          &= \gamma_r \left( \left(\ve^{N}_{(i,j)}\right)^T \hat L_{x_r}^{N} \ve^{N}_{(m,n)}
                + \left(\ve^{N+1}_{(i,j)}\right)^T \hat U_{x_r}^{N+1} \ve^{N+1}_{(m,n)}\right)\;.
    \end{split}
\end{equation}
In our situation holds $j,n\in\{0,\ldots,N\}$ and therefore the unit
vectors $\ve^{N+1}_{(i,j)}$ and $\ve^{N+1}_{(m,n)}$ are always zero
in the last $2N+3$ components. Hence we can reduce the dimension of
the last term without losing any information and end up with
\begin{equation}
    \mnt^m_n \Omega_r \vY_N = \gamma_r \left( \hat L_{x_r}^{N} + \hat U_{x_r}^{N} \right) \ve^{N}_{(m,n)} \;.
\end{equation}
Finally we get for one moment equation
\begin{equation}
    \begin{split}
        \mnt^m_n \oA \oE_N \vI_N
          &= \sum_{r=1}^3 \partial_{x_r} \left\langle \mnt^m_n \Omega_r \vY_N , \vI_N \right\rangle \\
          &= \sum_{r=1}^3 \left( \left(\gamma_r \left( \hat L_{x_r}^{N} + \hat U_{x_r}^{N} \right) \ve^{N}_{(m,n)} \right)^T \partial_{x_r} \vI_N \right) \\
          &= \left(\ve^{N}_{(m,n)}\right)^T \sum_{r=1}^3 \gamma_r \left( \left( \hat L_{x_r}^{N}\right)^T + \left(\hat U_{x_r}^{N} \right)^T  \right) \partial_{x_r} \vI_N\;.
    \end{split}
\end{equation}
Taking advantage of the symmetry properties provided in
\lemref{apx_sph:lem:symmetrie_properties_submatrices_sph} gives rise
to the definitions
\begin{equation}
    \label{cha_moment:eq:structure_complex_system_matrices}
    \oC_{x_1} = \hat L_{x_1}^{N} + \hat U_{x_1}^{N}\;, \quad
    \oC_{x_2} = -\hat L_{x_2}^{N} - \hat U_{x_2}^{N}\;, \quad
    \oC_{x_3} = \hat L_{x_3}^{N} + \hat U_{x_3}^{N} \;.
\end{equation}
For the set of all moment equations holds
\begin{equation}
    \label{cha_moment:eq:moment_term_transport_component}
    \oM_N \oA \oE_N \vI_N
      = \sum_{r=1}^3 \gamma_r \oC_{x_r} \partial_{x_r}\vI_N
      = \frac{1}{2} \oC_{x_1} \partial_{x_1}\vI_N + \frac{i}{2} \oC_{x_2}
             \partial_{x_2}\vI_N + \oC_{x_3} \partial_{x_3}\vI_N \;.
\end{equation}
Let us now analyze the scattering and absorption component $\oM_N
\oK \oE_N \vI_N$. As defined in
\eqref{cha_model:eq:operator_definitions_RTE} it decomposes into
\begin{equation}
    \oK \oE_N \vI_N = \left(\abc + \scc \right) \oE_N \vI_N - \oS \oE_N
    \vI_N
\end{equation}
and from \apxref{apx:B} (especially from
\eqref{apx_scat:eq:single_moment_absorb_and_scattering}) we get for
one single moment and $l\leq N$
\begin{equation}
    \label{cha_moment:eq:single_moment_scattering_absorption_opp}
    \mnt^k_l \oK \oE_N \vI_N = \tilde \scc_l \vI^k_l.
\end{equation}
For the set of all moment equations we end up with
\begin{equation}
    \label{cha_moment:eq:moment_scattering_absorption_opp}
    \oM_N \oK \oE_N \vI_N = \Sigma_N \vI_N
\end{equation}
where we define
\begin{equation}
    \Sigma =
    \begin{pmatrix}
      \tilde \Sigma_0 &  &  \\
       & \ddots &  \\
       &  & \tilde \Sigma_N \\
    \end{pmatrix}
\end{equation}
with diagonal submatrices
\begin{equation}
    \R^{2j+1}\times\R^{2j+1} \ni \tilde \Sigma_j = \tilde \scc_j \id
    \;.
\end{equation}
Finally, the full set of moment equations for a $P_N$ approximation
reads
    \begin{equation}
        \label{cha_moment:eq:complex_valued_pn_equations}
        \frac{1}{c}\partial_t \vI_N
            + \frac{1}{2} \oC_{x_1} \partial_{x_1} \vI_N
            + \frac{i}{2} \oC_{x_2} \partial_{x_2} \vI_N
            + \oC_{x_3} \partial_{x_3} \vI_N
            + \Sigma_N \vI_N = \vQ_N
    \end{equation}
with the vector of moments of the source term $\vQ_N$. This
representation is equivalent to the operator notation given in
\eqref{cha_moment:eq:operator_pn_equation} but the operators are
made explicit by their matrix representation.
\begin{prop}
    \label{cha_moment:prop:symmetric_complex_system_matrices}
    The system matrices $\oC_{x_1}$ and $\oC_{x_3}$ are
    symmetric. $\oC_{x_2}$ is skew symmetric. This means that the $P_N$ equations are hyperbolic.
\end{prop}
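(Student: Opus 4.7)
The plan is to reduce everything to the symmetry identities between the submatrices $\hat L_{x_r}^N$ and $\hat U_{x_r}^N$ that are recorded in $\lemref{apx_sph:lem:symmetrie_properties_submatrices_sph}$. Those matrices arise from the three-term recursion $\Omega_r \vY_N = (\hat L_{x_r}^N)\vY_{N-1} + (\hat U_{x_r}^N)\vY_{N+1}$ (plus the truncation used on the previous page), and the lemma amounts to the statement that $\Omega_r$ acts as a multiplication operator which is Hermitian on $L^2(\sphere,\C)$. Concretely, the relations I would extract are
\begin{equation*}
(\hat L_{x_r}^N)^T = \hat U_{x_r}^N \quad \text{for } r \in \{1,3\}, \qquad (\hat L_{x_2}^N)^T = -\hat U_{x_2}^N,
\end{equation*}
all three matrices having real entries.

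Given these, the symmetry assertions follow by a single transpose applied to the definitions \eqref{cha_moment:eq:structure_complex_system_matrices}. First,
\begin{equation*}
\oC_{x_1}^T = (\hat L_{x_1}^N)^T + (\hat U_{x_1}^N)^T = \hat U_{x_1}^N + \hat L_{x_1}^N = \oC_{x_1},
\end{equation*}
and the same computation (with $r=3$) handles $\oC_{x_3}$. For the middle direction the extra sign flips the result,
\begin{equation*}
\oC_{x_2}^T = -(\hat L_{x_2}^N)^T - (\hat U_{x_2}^N)^T = \hat U_{x_2}^N + \hat L_{x_2}^N = -\oC_{x_2},
\end{equation*}
so $\oC_{x_2}$ is skew-symmetric.

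For the hyperbolicity claim I would form the principal symbol of \eqref{cha_moment:eq:complex_valued_pn_equations} for an arbitrary real covector $\xi = (\xi_1,\xi_2,\xi_3) \in \R^3$,
\begin{equation*}
\oS(\xi) = \tfrac{\xi_1}{2}\oC_{x_1} + \tfrac{i\xi_2}{2}\oC_{x_2} + \xi_3 \oC_{x_3},
\end{equation*}
and observe that it is Hermitian: $\oC_{x_1}$ and $\oC_{x_3}$ are real symmetric hence Hermitian, and for the middle term $(i\oC_{x_2})^* = -i\,\overline{\oC_{x_2}}^T = -i(-\oC_{x_2}) = i\oC_{x_2}$ because $\oC_{x_2}$ is real and skew-symmetric. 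A real-linear combination of Hermitian matrices is Hermitian, so $\oS(\xi)$ is unitarily diagonalizable with real eigenvalues for every $\xi$, which is exactly the definition of (strong) hyperbolicity of the first-order system.

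The only real obstacle is bookkeeping with the appendix: verifying both that the entries of $\hat L_{x_r}^N$ and $\hat U_{x_r}^N$ are real (so that real symmetry implies Hermiticity and $i\oC_{x_2}$ is Hermitian) and that the truncation used to replace $\vY_{N+1}$ by $\vY_N$ in $\mnt^m_n \Omega_r \vY_N$ does not break the transpose relations. Both points are essentially immediate from the explicit Clebsch--Gordan-type coefficients in the appendix, since those coefficients are real square roots of positive rationals and the truncation only removes rows/columns indexed beyond $N$, which are symmetric with respect to the transpose.
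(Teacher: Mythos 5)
Your proof is correct and follows the same route as the paper: the symmetry and skew-symmetry of $\oC_{x_1}$, $\oC_{x_2}$, $\oC_{x_3}$ are read off directly from the transpose relations of \lemref{apx_sph:lem:symmetrie_properties_submatrices_sph} applied to the definitions \eqref{cha_moment:eq:structure_complex_system_matrices}, which is exactly what the paper's one-line proof invokes. You go a bit further than the paper by actually verifying the hyperbolicity claim --- checking that the entries are real, that $\tfrac{i}{2}\oC_{x_2}$ is Hermitian, and that the symbol is therefore Hermitian for every real covector --- a step the paper asserts without argument, and you also silently correct the obvious typo in the lemma (which states the $x_1$ relation twice instead of giving the $x_3$ case).
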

\begin{proof}
    Due to the symmetry results from
    \lemref{apx_sph:lem:symmetrie_properties_submatrices_sph} and the
    structure of the system matrices given in
    \eqref{cha_moment:eq:structure_complex_system_matrices} the
    result is obvious.
\end{proof}

\section{Explicit Diffusion Operators for $D_N$}
\label{cha_moment:sec:analysis_operators_modified_pn_approximation}
The $D_N$ equations \eqref{cha_moment:eq:modified_pn_operator_equation} 
have one additional component that was not already treated for the $P_N$
equations in \secref{cha_moment:sec:analysis_operators_pn_approximation}:
\begin{equation}
     \oM_N \oA \left( \oK^{-1} \left( \tilde \oQ_N - \tilde \oP_N \oA \oE_N \vIN \right) \right) \;.
\end{equation}
In this section we investigate the properties of this term and
develop a representation that fits into the matrix framework
developed for the $P_N$ equations in
\eqref{cha_moment:eq:complex_valued_pn_equations}.

Dealing only with isotropic source terms results in vanishing
deviations $\tilde \oQ_N$ of the source term (see
\eqref{cha_moment:eq:properties_moments_isotropic_source}) and
simplifies the problem to analyzing
\begin{equation}
    \label{cha_moment:eq:modified_pn_additional_term_operator_notation}
      -\oM_N \oA \oK^{-1} \tilde \oP_N \oA \oE_N \vIN\;.
\end{equation}
We start with
\begin{equation}
    \begin{split}
        \tilde \oP_N \oA \oE_N \vIN
          &= \tilde \oP_N \oA \sum_{l=0}^N \sum_{k=-l}^l \sph^k_l(\Omega) \vI^k_l
           = \tilde \oP_N \oA \left\langle \vY_N, \vI_N \right\rangle\\
          &= \tilde \oP_N \sum_{s=1}^3 \left\langle \Omega_s \vY_N, \partial_{x_s} \vI_N \right\rangle \;.
    \end{split}
\end{equation}
Formally the projection $\tilde \oP_N$ is only defined as an
operator that acts on functions from $\Idomt$ into $\Idomt$. When
working with vectors of functions we just apply the projection to
every component.

Using the decomposition of the orthogonal projection into $\tilde
\oP_N = \id - \oP_N$ and taking a component wise look at $\oP_N
\Omega_s \vY_N$ leads us to
\begin{equation}
    \begin{split}
        \oP_N \Omega_s \sph^i_j
          &= \oP_N \left( \gamma_s \left( \left(\ve^{N}_{(i,j)}\right)^T \hat L_{x_s}^{N} \vY_N + \left(\ve^{N+1}_{(i,j)}\right)^T \hat U_{x_s}^{N+1} \vY_{N+1}\right)  \right)\\
          &= \gamma_s \left( \left(\ve^{N}_{(i,j)}\right)^T \hat L_{x_s}^{N} \oP_N \vY_N + \left(\ve^{N+1}_{(i,j)}\right)^T \hat U_{x_s}^{N+1} \oP_N \vY_{N+1}\right) \;.
    \end{split}
\end{equation}
Applying the projection operator $\oP_N$ on spherical harmonics up
to order $N$ is an identity operation and it holds $\oP_N \sph^m_n =
\sph^m_n$ for $n\leq N$. If we apply $\oP_N$ on spherical harmonic
of order larger than $N$ the result is $\oP_N \sph^m_n = 0$ for
$n>N$.
\begin{equation}
    \oP_N \vY_{N+1} = (\sph^0_0, \sph^{-1}_1,\ldots,\sph^{N-1}_N,
    \sph^N_N, \underbrace{0, \ldots, 0}_{2N+3 \; \text{zeros}})^T =\vY_{N+1}^\text{cut}\;.
\end{equation}
Using these results for the orthogonal projection leads to
\begin{equation}
    \begin{split}
        \tilde \oP_N \Omega_s \sph^i_j
          &= (\id - \oP_N) \Omega_s \vY^i_j\\
          &= \gamma_s \left( \left(\ve^{N}_{(i,j)}\right)^T \hat L_{x_s}^{N} \left(\vY_N - \vY_N\right)
             + \left(\ve^{N+1}_{(i,j)}\right)^T \hat U_{x_s}^{N+1} \left( \vY_{N+1} - \vY_{N+1}^\text{cut} \right) \right) \\
          &= \gamma_s \left(\ve^{N+1}_{(i,j)}\right)^T \hat U_{x_s}^{N+1} \left( 0,\ldots,0, \sph^{-N-1}_{N+1},\ldots,\sph^{N+1}_{N+1}\right)^T \;.
    \end{split}
\end{equation}
By using the fact that $j\leq N$ and the special structure of the
matrices $U_{x_s}^{N+1}$ (see
\eqref{apx_sph:eq:structure_complex_system_matrices_subparts}) we
finally obtain for the full vector of spherical harmonics $\vY_N$
\begin{equation}
    \tilde \oP_N \Omega_s \vY_N = \gamma_s Z^N_\text{cut} \hat U_{x_s}^{N+1} \vY_{N+1}
\end{equation}
with
\begin{equation}
    Z^N_\text{cut} = \left(
    \begin{array}{ccc|c|c}
      0 &        &   & 0      & 0      \\
        & \ddots &   & \vdots & \vdots \\
        &        & 0 & 0      & 0      \\
      \hline
      0 & \cdots & 0 & \id    & 0      \\
    \end{array} \right)
\end{equation}
It holds $Z^N_\text{cut} \in \R^{(N+1)^2} \times \R^{(N+2)^2}$ and
the block with the identity matrix is located in the rows $(N^2+1)$
till $(N+1)^2$ and the columns $(N^2+1)$ till $(N+1)^2$. Finally we
get
\begin{equation}
    \begin{split}
        \tilde \oP_N \oA \oE_N \vIN
          &= \sum_{s=1}^3 \gamma_s \left(Z^N_\text{cut} \hat U_{x_s}^{N+1} \vY_{N+1}\right)^T \partial_{x_s} \vI_N \\
          &= \left(\vY_{N+1}\right)^T \sum_{s=1}^3 \gamma_s \left(\hat U_{x_s}^{N+1}\right)^T \left(Z^N_\text{cut}\right)^T \partial_{x_s} \vI_N \;.
    \end{split}
\end{equation}
Here we see, that only the moments of order $N$ influence the
additional term in the $D_N$ equations.

Applying the inverse of the combined
scattering and absorption operator $\oK^{-1}$ and
using results from \apxref{apx:B} gives
\begin{equation}
    \oK^{-1} \tilde \oP_N \oA \oE_N \vIN =
    \frac{1}{\tilde \scc_{N+1}} \left(\vY_{N+1}\right)^T \sum_{s=1}^3 \gamma_s \left(\hat U_{x_s}^{N+1}\right)^T \left(Z^N_\text{cut}\right)^T \partial_{x_s} \vI_N \;.
\end{equation}
By considering the complete term from
\eqref{cha_moment:eq:modified_pn_additional_term_operator_notation},
for one single moment equation holds
\begin{equation}
    \label{cha_moment:eq:prestep1_modified_pn_derivation}
    \begin{split}
        \mnt^m_n \oA \oK^{-1} & \tilde \oP_N \oA \oE_N \vIN \\
          &= \mnt^m_n \oA \left( \frac{1}{\tilde \scc_{N+1}} \left(\vY_{N+1}\right)^T \sum_{s=1}^3 \gamma_s \left(\hat U_{x_s}^{N+1}\right)^T \left(Z^N_\text{cut}\right)^T \partial_{x_s} \vI_N \right) \\
          &= \sum_{r=1}^3 \partial_{x_r} \left( \frac{1}{\tilde \scc_{N+1}} \left(\mnt^m_n \Omega_r \left(\vY_{N+1}\right)^T \right) \sum_{s=1}^3 \gamma_s \left(\hat U_{x_s}^{N+1}\right)^T \left(Z^N_\text{cut}\right)^T \partial_{x_s} \vI_N \right) \;,
    \end{split}
\end{equation}
where we have to find an expression for $\mnt^m_n \Omega_r
\left(\vY_{N+1}\right)^T$. Again, we start be analyzing on single
component of the vector with $n\in\{0,\ldots,N\}$,
$m\in\{-n,\ldots,n\}$, $j\in\{0,\ldots,N+1\}$ and
$i\in\{-j,\ldots,j\}$
\begin{equation}
    \begin{split}
        \mnt^m_n \Omega_r \sph^i_j
          &= \mnt^m_n \left( \gamma_r \left( \left(\ve^{N+1}_{(i,j)}\right)^T \hat L_{x_r}^{N+1} \vY_{N+1} + \left(\ve^{N+2}_{(i,j)}\right)^T \hat U_{x_r}^{N+2} \vY_{N+2}\right) \right) \\
          &= \gamma_r \left( \left(\ve^{N+1}_{(i,j)}\right)^T \hat L_{x_r}^{N+1} \left(\mnt^m_n \vY_{N+1}\right) + \left(\ve^{N+2}_{(i,j)}\right)^T \hat U_{x_r}^{N+2} \left(\mnt^m_n \vY_{N+2}\right)\right)\\
          &= \gamma_r \left( \left(\ve^{N+1}_{(i,j)}\right)^T \hat L_{x_r}^{N+1} \ve^{N+1}_{(m,n)} + \left(\ve^{N+2}_{(i,j)}\right)^T \hat U_{x_r}^{N+2} \ve^{N+2}_{(m,n)} \right)\;.
    \end{split}
\end{equation}
Due to the orthogonality of the spherical harmonics the expressions
$\mnt^m_n \vY_{N+1}$ and $\mnt^m_n \vY_{N+2}$ lead to unit vectors
$\ve^{N+1}_{(m,n)}$ and $\ve^{N+2}_{(m,n)}$. But since $n \leq N$
the component that is one is always located in the first $(N+1)^2$
components. For the same reason ($j\leq N+1$) the last $2N+4$
components of $\ve^{N+2}_{(i,j)}$ are always zero and we therefore
can neglect the last $2N+4$ rows and columns of $\hat
U_{x_r}^{N+2}$. Due to the structure of these matrices this is
equivalent to writing
\begin{equation}
    \begin{split}
        \mnt^m_n \Omega_r \sph^i_j
          &= \gamma_r \left(\ve^{N+1}_{(i,j)}\right)^T \left( \hat L_{x_r}^{N+1} + \hat U_{x_r}^{N+1} \right) \ve^{N+1}_{(m,n)} \;.
    \end{split}
\end{equation}
For a complete vector of spherical harmonics we obtain
\begin{equation}
    \begin{split}
        \mnt^m_n \Omega_r \vY_{N+1}
          &= \gamma_r \left( \hat L_{x_r}^{N+1} + \hat U_{x_r}^{N+1} \right) \ve^{N+1}_{(m,n)} \;.
    \end{split}
\end{equation}
For the transposed expression that is relevant in
\eqref{cha_moment:eq:prestep1_modified_pn_derivation} we end up with
\begin{equation}
    \begin{split}
        \mnt^m_n \Omega_r \left(\vY_{N+1}\right)^T
          &= \gamma_r \left(\ve^{N+1}_{(m,n)}\right)^T \left( \hat L_{x_r}^{N+1} + \hat U_{x_r}^{N+1} \right)^T \;.
    \end{split}
\end{equation}
If we use the fact $\hat L_{x_r}^k \hat L_{x_s}^k=0$ and $\hat
U_{x_r}^k \hat U_{x_s}^k=0$ for $k\in \N$ we get
\begin{equation}
    \label{cha_moment:eq:additional_term_in_modified_pn}
    \begin{split}
        & \mnt^m_n \oA \oK^{-1} \tilde \oP_N \oA \oE_N \vIN \\
          &= \left(\ve^{N+1}_{(m,n)}\right)^T \sum_{r=1}^3 \gamma_r \partial_{x_r} \left( \frac{1}{\tilde \scc_{N+1}} \left( \hat L_{x_r}^{N+1} \right)^T \sum_{s=1}^3 \gamma_s \left(\hat U_{x_s}^{N+1}\right)^T \left(Z^N_\text{cut}\right)^T \partial_{x_s} \vI_N \right) \;, \\
    \end{split}
\end{equation}
or if we use
\begin{equation}
    \left(\R^{(N+1)^2}\right)^2 \ni \oD_{r,s} = \gamma_r \gamma_s
    Z^N_\text{restrict}
    \left( \hat L_{x_r}^{N+1} \right)^T 
    \left( \hat U_{x_s}^{N+1} \right)^T 
    \left( Z^N_\text{cut} \right)^T \;,
\end{equation}
with
\begin{equation}
    \R^{(N+1)^2} \times \R^{(N+2)^2} \ni Z^N_\text{restrict} =
    \left(
    \begin{array}{ccc|ccc}
        1 &        &   & 0      & \cdots & 0 \\
          & \ddots &   & \vdots &        & \vdots \\
          &        & 1 & 0      & \cdots & 0 \\
    \end{array} \right) \;,
\end{equation}
we end up with
\begin{equation}
    -\oM_N \oA \oK^{-1} \tilde \oP_N \oA \oE_N \vIN
      = -\sum_{r=1}^3 \partial_{x_r} \left( \frac{1}{\tilde \scc_{N+1}} \sum_{s=1}^3 \oD_{r,s} \partial_{x_s} \vI_N \right) \;.
\end{equation}
for the set of all moment equations.

Finally, the moment equations for the $D_N$ approach read
    \begin{equation}
        \label{cha_moment:eq:complex_valued_modified_pn_equations}
        \begin{split}
            \frac{1}{c}\partial_t \vI_N
                + \frac{1}{2} \oC_{x_1} \partial_{x_1} \vI_N
                &+ \frac{i}{2} \oC_{x_2} \partial_{x_2} \vI_N
                + \oC_{x_3} \partial_{x_3} \vI_N \\
                &\qquad \qquad -\sum_{r=1}^3 \partial_{x_r} \left( \frac{1}{\tilde \scc_{N+1}} \sum_{s=1}^3 \oD_{r,s} \partial_{x_s} \vI_N \right)
                + \Sigma_N \vI_N = \vQ_N \;.
        \end{split}
    \end{equation}
Due to the special structure of the matrices $\hat U_{x_s}^{N+1}$
and $\hat L_{x_s}^{N+1}$ from
\eqref{apx_sph:eq:structure_complex_system_matrices_subparts} we get
for the product
\begin{equation}
      \Big( \hat L_{x_r}^{N+1} \Big)^T \! 
      \Big( \hat U_{x_s}^{N+1} \Big)^T \!
      = \Big( \hat U_{x_r}^{N+1}  \hat L_{x_s}^{N+1} \Big)^T \!
      = \begin{pmatrix}
              \big( U_{x_r}^1 L_{x_s}^1 \big)^T  &    &    &    \\
                                           &  \ddots  &    &    \\
            &    &  \big(U_{x_r}^{N+1} L_{x_s}^{N+1} \big)^T  &    \\
            &    &                                            &  0   
        \end{pmatrix} \;,
\end{equation}
For the complete matrix $\oD_{r,s}$ we obtain
\begin{equation}
    \begin{split}
        \oD_{r,s}
          &= \gamma_r \gamma_s Z^N_\text{restrict}
             \begin{pmatrix}
               \left(U_{x_r}^1 L_{x_s}^1\right)^T &        &                             &  \\
                                   & \ddots &                             &  \\
                                   &        & \left(U_{x_r}^{N+1} L_{x_s}^{N+1}\right)^T &  \\
                                   &        &                             & 0\\
             \end{pmatrix}
             \left(Z^N_\text{cut}\right)^T \\
          &= \begin{pmatrix}
               0 &        &   &                                                              \\
                 & \ddots &   &                                                              \\
                 &        & 0 &                                                              \\
                 &        &   & \gamma_r \gamma_s \left(U_{x_r}^{N+1} L_{x_s}^{N+1}\right)^T \\
             \end{pmatrix} \;.
    \end{split}
\end{equation}
From this matrix we see, that the $D_N$ equations differ
from the usual $P_N$ equations only in the equations for the moments
of order $N$.

\section{Numerical Results}
\label{cha_numerics:sec:numerical_results}
In this section we present some numerical simulations of two radiative 
transport problems.  We compare the $D_N$ method with the standard 
$P_N$ method.  We first investigate a one-dimensional test problem that can 
be solved analytically.  The second test problem compares the two methods in 
a strongly inhomogeneous two-dimensional medium.
%
%
%
\subsection{$P_N$ and $D_N$ models vs. benchmark solution}
\label{cha_numerics:ssec:numerical_results_benchmark}
In this numerical test we compare the $P_N$ and $D_N$ 
approximations to an analytic benchmark solution.  The details regarding 
the benchmark solution can be found in \cite{Su1997}.

The physical setting is initially a cold, homogeneous, and infinite 
isotropically scattering medium.  An internal slab radiation source 
is switched on at time $t=0$ and off at $t=t_\text{end}$.   Because 
this problem has slab symmetry, it can be described by the $1D$
radiative transfer and energy transfer equations
\begin{subequations}
    \begin{equation}
        \frac{1}{c} \partial_t \I + \mu \partial_x \I 
        + (\scc+\abc)\I - \oS \I = \abc a T^4 + \oQ \;,
    \end{equation}
    \begin{equation}
        \varrho c_v \frac{\partial T}{\partial t} 
        = \abc \left(\int_{-1}^1 \I(\mu') \, d\mu' - 2 aT^4 \right)  \;.
    \end{equation}
\end{subequations}
Assuming that the coefficients of absorption and scattering are 
constant and that
\begin{equation}
    c_v = T^3 \;,
\end{equation}
the usually nonlinear system becomes linear in the variables $\I$ and 
$T^4$.  For convenience we set $c=1$ and $\varrho=1$.

   We impose the boundary conditions 
\begin{equation}
    \lim_{x\to \pm \infty} \I(t,x,\mu_\text{in}) = 0 \;, \qquad 
    \lim_{x\to \pm \infty} T(t,x)=0 \;,
\end{equation}
and the initial conditions
\begin{equation}
    \I(t=0,x,\mu) = 0 \;, \qquad T(t=0,x) = 0 \;.
\end{equation}
A uniform isotropic radiation source is turned on in the slab 
$[-x_0,x_0]$ over the time interval $t\in[0,t_\text{end}]$.  It can 
be described by
\begin{equation}
    \Q(t,x,\mu) =
    \begin{cases}
        \frac{1}{2x_0} 
        & \text{for} \quad x \in [-x_0,x_0] \quad
          \text{and} \quad t \in [0,t_\text{end}] \;,\\
        0 & \text{otherwise} \;.
    \end{cases}
\end{equation}
Several benchmark solutions were provided in \cite{Su1997} for various 
values of the absorption and scattering coefficients.  For our 
comparisons we will use the one with
\begin{equation}
    \abc = 1 \;, \quad \scc = 0 \;, \quad x_0 = 0.5 \;, \quad
    t_\text{end} = 10 \;.
\end{equation}
The comparison of this benchmark solution with the $P_N$ and $D_N$ 
solutions is given in \figref{cha_numerics:fig:suolson_results}.
The results have been obtained with a kinetic scheme for the transport part of 
the equation and a standard finite differences discretization of the difusion terms. 
The grid has been refined until numerical convergence was observed.
\begin{figure}[htbp]
    \centering
    \subbottom[$P_N$ approximations.]
        {\label{cha_numerics:fig:pn_solution_suolson}
        \includegraphicschoice{width=0.425\textwidth}{pn_list}}\hspace{1cm}
    \subbottom[$D_N$ approximations.]
        {\label{cha_numerics:fig:mpn_solution_suolson}
        \includegraphicschoice{width=0.425\textwidth}{mpn_list}}
    \subbottom[Comparison: $D_1$ vs. $P_1$ approximations.]
        {\label{cha_numerics:fig:compare_mp1_p1_solution_suolson}
        \includegraphicschoice{width=0.425\textwidth}{p1_vs_mp1}}\hspace{1cm}
    \subbottom[Comparison: $D_3$ vs. $P_3$ approximations.]
        {\label{cha_numerics:fig:compare_mp3_p3_solution_suolson}
        \includegraphicschoice{width=0.425\textwidth}{p3_vs_mp3}}
    \caption{Energy distribution at time $t=1\,s$, $t=3.16\,s$ and
             $t=10\,s$ for $P_N$ and $D_N$ approximations of
             different order.}
    \label{cha_numerics:fig:suolson_results}
\end{figure}
\begin{figure}[htbp]
    \begin{center}
        \includegraphicschoice{width=0.4\linewidth}{BruHol_Geometry}
    \end{center}
    \caption{Gray regions and the center area are highly absorbing while 
             white regions are highly scattering.  The radiation source is 
             located in the hatched center region.}
    \label{cha_numerics:fig:bruhol_geometry_setup}
\end{figure}

The thick black symbols mark the benchmark solution at times
$t=1\,s$, $t=3.16 \,s$ and $t=10\,s$.  The other curves are explained
in the legend of each plot.

   As we can see in \figref{cha_numerics:fig:pn_solution_suolson} and
\figref{cha_numerics:fig:mpn_solution_suolson}, the $P_N$ methods as
well as the $D_N$ approaches lead to solutions that
converge for increasing order $N$ to the benchmark solution.  But
comparing the order of the method that is necessary to reach a
specific accuracy shows that the $D_N$ approach leads to
similar results with less computational effort.  Additionally, we see
that for small times ($t=1$) both methods perform similarly well. But
for large times ($t=10$) the $D_1$ solution agrees already
very well with the benchmark solution while the $P_1$ solution is
much further away, especially in the region of the central peak.  
The solutions of order $5$ in the $D_N$ and $P_N$ approach 
(not shown) are almost identical and differ only in a few regions 
from the benchmark solution.

\subsection{Lattice Problem}
\label{cha_numerics:ssec:numerical_results_lattice_core}
This is an example with a complicated geometry, taken from 
\cite{Brunner2005}.  We consider a checkerboard of highly scattering 
and highly absorbing regions on a lattice core.  A graphical
representation of the setting is shown in
\figref{cha_numerics:fig:bruhol_geometry_setup}.
The white regions consist of a purely scattering material with
$\abc=0\, cm^{-1}$ and $\scc=1 \, cm^{-1}$. The eleven gray regions
and the central region are purely absorbing with $\abc=10\, cm^{-1}$
and $\scc=0 \, cm^{-1}$. For the propagation speed we assume $c=1 \,
cm/s$.  At time zero, a source of strength one is turned on in the
hatched central region.  The computational domain is surrounded on
all sides by vacuum boundaries.

The numerical results presented here have been obtained using a finite element
discretization with streamline diffusion. We used between 25000 and 400000 
bilinear elements. More details on the method can be found in \cite{Schaefer08}.

In \figref{cha_numerics:fig:bruhol_results} we present the energy
distribution ($\int_\sphere \I(\Omega) d\Omega$) of the radiative
field $3.2$ seconds after the radiation source in the center is turned
on.  The scale is logarithmic ($\log_{10}$).  We compare $P_N$ methods
with $D_N$ methods of different order.
\begin{figure}[htbp]
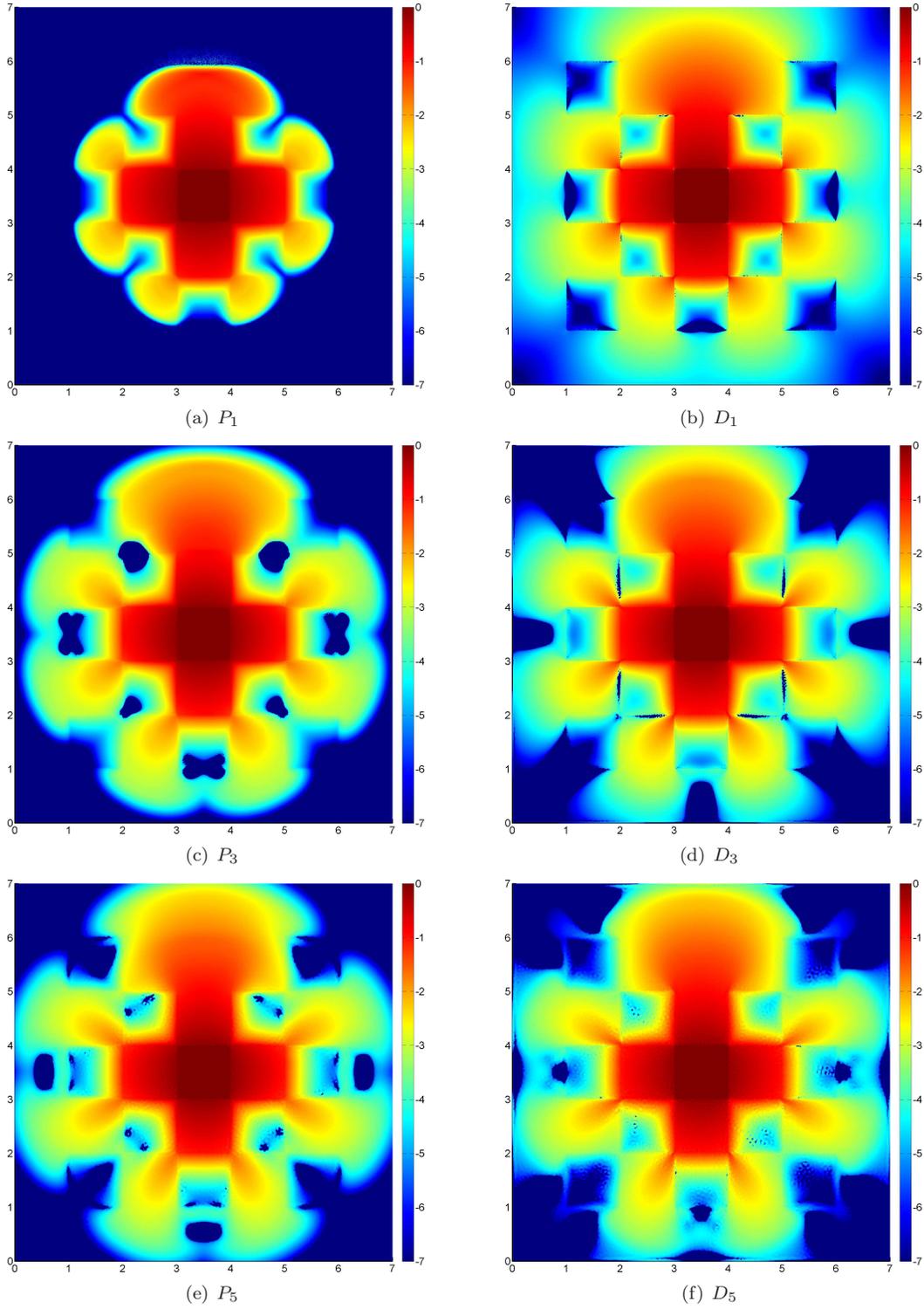

    \centering
    \subbottom[$P_1$]
        {\label{cha_numerics:fig:p1_solution_holloway}
        \includegraphicschoice{width=0.425\textwidth}{01_SHP1-Energy}}\hspace{1cm}
    \subbottom[$D_1$]
        {\label{cha_numerics:fig:mp1_solution_holloway}
        \includegraphicschoice{width=0.425\textwidth}{01_LSHP1-Energy}}
    \subbottom[$P_3$]
        {\label{cha_numerics:fig:p3_solution_holloway}
        \includegraphicschoice{width=0.425\textwidth}{03_SHP3-Energy}}\hspace{1cm}
    \subbottom[$D_3$]
        {\label{cha_numerics:fig:mp3_solution_holloway}
        \includegraphicschoice{width=0.425\textwidth}{03_LSHP3-Energy}}
    \subbottom[$P_5$]
        {\label{cha_numerics:fig:p5_solution_holloway}
        \includegraphicschoice{width=0.425\textwidth}{05_SHP5-Energy}}\hspace{1cm}
    \subbottom[$D_5$]
        {\label{cha_numerics:fig:mp5_solution_holloway}
        \includegraphicschoice{width=0.425\textwidth}{05_LSHP5-Energy}}
    \caption{Energy distribution for lattice problem approximated 
             by $P_N$ and $D_N$ methods of different order presented 
             in a logarithmic scale ($\log_{10}$).}
    \label{cha_numerics:fig:bruhol_results}
\end{figure}
\begin{figure}[htb]
    \centering
    \contsubbottom[$P_7$]
        {\label{cha_numerics:fig:p7_solution_holloway}
        \includegraphicschoice{width=0.425\textwidth}{07_SHP7-Energy}}\hspace{1cm}
    \contsubbottom[$D_7$]
        {\label{cha_numerics:fig:mp7_solution_holloway}
        \includegraphicschoice{width=0.425\textwidth}{07_LSHP7-Energy}}
    \contcaption{Energy distribution for lattice problem approximated by $P_N$
                 and $D_N$ methods of different order presented in a logarithmic scale ($\log_{10}$). (continued)}
    \subconcluded
\end{figure}
The main differences in the solutions can be found in the beams
leaking between the corners of the absorbing regions, the shadows
behind the absorbing regions and the front of photons escaping from
the source region.

As we can see in the resulting figures, for increasing order, both
approaches converge to the same solution which for the $P_7$ and
$D_7$ models is almost the same as the one obtained by
Monte Carlo simulations in \cite{Brunner2005}.  But the $D_N$ model 
gives much better results for lower order approximations than the 
$P_N$ model.  In particular, the front of the escaping photons is 
tracked much better and the shadows behind the absorbing regions 
are more visible in lower order $D_N$ approaches.

The fact that the front of photons is not captured that well by the
$P_N$ method is related to the hyperbolic structure of the
equations.  Especially in the $P_1$ model the information can be
distributed only with one characteristic speed of $1/\sqrt{3}\,
cm/s$.  But that is far too slow, compared to the real speed of the
photons ($1 \, cm/s$).  The higher the order $N$ of the $P_N$
approximations the more the characteristic speed of the equations 
approaches the desired one and therefore the front can be tracked
much better (see \cite{Str98}).  Due to the additional diffusive term
introduced by the deviation approximation into the $D_N$
approximation, this effect is not present there.

\bibliographystyle{amsplain}
\bibliography{Lib_file}

\providecommand{\bysame}{\leavevmode\hbox to3em{\hrulefill}\thinspace}
\providecommand{\MR}{\relax\ifhmode\unskip\space\fi MR }
\providecommand{\MRhref}[2]{%
  \href{http://www.ams.org/mathscinet-getitem?mr=#1}{#2}
}
\providecommand{\href}[2]{#2}
\begin{thebibliography}{10}

\bibitem{AniPenSam91}
A.~M. Anile, S.~Pennisi, and M.~Sammartino, \emph{A thermodynamical approach to
  {E}ddington factors}, J. Math. Phys. \textbf{32} (1991), 544--550.

\bibitem{Arfken1970}
George Arfken, \emph{Mathematical methods for physicists}, 2 ed., Academic
  Press, 1970.

\bibitem{Brunner2005}
T.A. Brunner and J.~P. Holloway, \emph{Two-dimensional time dependent {R}iemann
  solvers for neutron transport}, Journal of Computational Physics \textbf{210}
  (2005), 386--399.

\bibitem{Chandrasekhar1944}
S.~Chandrasekhar, \emph{On the radiative equilibrium of a stellar atmosphere},
  Astrophysical Journal \textbf{99} (1944), 180.

\bibitem{DautrayLions1993}
R.~Dautray and J.~L. Lions, \emph{Mathematical analysis and numerical methods
  for science and technology (v.6)}, Springer, Paris, 1993.

\bibitem{Davison}
B.~Davison, \emph{Neutron transport theory}, Oxford University Press, 1958.

\bibitem{DubFeu99}
B.~Dubroca and J.~L. Feugeas, \emph{Entropic moment closure hierarchy for the
  radiative transfer equation}, C. R. Acad. Sci. Paris Ser. I \textbf{329}
  (1999), 915--920.

\bibitem{FrankHauckLevermore2009}
M.~Frank, C.D. Hauck, and C.D. Levermore, \emph{Boundary conditions for moment
  closures}, in preparation, 2009.

\bibitem{FraSei09}
M.~Frank and B.~Seibold, \emph{Optimal prediction for radiative transfer: A new
  perspective on moment closure},  (2009), submitted.

\bibitem{Gel61}
E.~M. Gelbard, \emph{Simplified spherical harmonics equations and their use in
  shielding problems}, Tech. Report WAPD-T-1182, Bettis Atomic Power
  Laboratory, 1961.

\bibitem{LarThoKlaSeaGot02}
E.~W. Larsen, G.~Th\"ommes, A.~Klar, M.~Sea\"{\i}d, and T.~G\"otz,
  \emph{Simplified ${P}_{N}$ approximations to the equations of radiative heat
  transfer in glass}, J. Comput. Phys. \textbf{183} (2002), 652--675.

\bibitem{Lev84}
C.~D. Levermore, \emph{Relating {E}ddington factors to flux limiters}, J.
  Quant. Spectrosc. Radiat. Transfer \textbf{31} (1984), 149--160.

\bibitem{Lev96}
\bysame, \emph{Moment closure hierarchies for kinetic theories}, J. Stat. Phys.
  \textbf{83} (1996), 1021--1065.

\bibitem{Levermore2005}
C.~D. Levermore, \emph{Transition regime models for radiative transport},
  presentation at {IPAM}: Grand challenge problems in computational
  astrophysics workshop on transfer phenomena, 2005.

\bibitem{Levermore2009}
\bysame, \emph{Moment closures for radiative transport}, in preparation, 2009.

\bibitem{Liu94}
P.~Liu, \emph{A new phase function approximation to {Mie} scattering for
  radiative transport equations}, Phys. Med. Biol. \textbf{39} (1994),
  1025--1036.

\bibitem{Min78}
G.~N. Minerbo, \emph{Maximum entropy {E}ddington factors}, J. Quant. Spectrosc.
  Radiat. Transfer \textbf{20} (1978), 541--545.

\bibitem{Pascucci2004}
I.~Pascucci, S.~Wolf, J.~Steinacker, C.~P. Dullemonda, Th. Henning,
  G.~Niccolini, P.~Woitke, and B.~Lopez, \emph{The $2{D}$ continuum radiative
  transfer problem}, Astronomy \& Astrophysics \textbf{417} (2004), 793--805.

\bibitem{Pom93}
G.~C. Pomraning, \emph{Asymptotic and variational derivations of the simplified
  {$P_N$} equations}, Ann. Nuclear Energy \textbf{20} (1993), 623.

\bibitem{Rosseland}
S.~Rosseland, \emph{Theoretical astrophysics: Atomic theory and the analysis of
  stellar atmospheres and envelopes}, Clarendon Press, 1936.

\bibitem{Schaefer08}
M.~Sch{\"a}fer, \emph{Moment methods for radiative transfer - modeling,
  simulation and optimization}, Ph.D. thesis, University of Kaiserslautern,
  2008, ISBN 978-3-89963-717-5.

\bibitem{SeiFra09}
B.~Seibold and M.~Frank, \emph{Optimal prediction for moment models: crescendo
  diffusion and reordered equations}, Continuum Mech. Thermodyn. (2009), to
  appear.

\bibitem{Steinacker2006}
J.~Steinacker, A.~Bacmann, and T.~Henning, \emph{Ray tracing for complex
  astrophysical high-opacity structures}, The Astrophysical Journal
  \textbf{645} (2006), 920--927.

\bibitem{Stratton1941}
J.~A. Stratton, \emph{Electromagnetic theory}, McGraw-Hill New York, 1941.

\bibitem{Str98}
H.~Struchtrup, \emph{On the number of moments in radiative transfer problems},
  Ann. Phys. (N.Y.) \textbf{266} (1998), 1--26.

\bibitem{Su1997}
B.~Su and G.L. Olson, \emph{An analytical benchmark for non-equilibrium
  radiative transfer in an isotropically scattering medium}, Ann. Nucl. Energy
  \textbf{24} (1997), no.~13, 1035--1055.

\bibitem{Thoemmes}
G.~Th{\"o}mmes, \emph{Radiative heat transfer equations for glass cooling
  problems: Analysis and numerics}, Ph.D. thesis, TU Darmstadt, 2002.

\bibitem{Hulst1980}
H.~C. {Van de Hulst}, \emph{Multiple light scattering: {T}ables, formulas and
  applications}, New York: Academic Press, 1980.

\bibitem{Wolf2001}
S.~Wolf, \emph{Inverse raytracing based on the {M}onte-{C}arlo method},
  Astronomy \& Astrophysics \textbf{379} (2001), 690--696.

\bibitem{Wolf1999}
S.~Wolf, Th. Henning, and B.~Stecklum, \emph{Multidimensional self-consistent
  radiative transfer simulations based on the {M}onte-{C}arlo method},
  Astronomy \& Astrophysics \textbf{349} (1999), 839--850.

\end{thebibliography}

\appendix
\section{Properties of Spherical Harmonics}
\label{apx:A}
Spherical harmonics are used as a set of basis functions for
representing functions mapping from the unit sphere into the complex
numbers. A good overview on the basics and properties of
spherical harmonics can be found in \cite{Arfken1970}. Here only the
most important properties related to moment methods will be
recalled.

Since Spherical harmonics act on the unit sphere, a parameterization is needed
\begin{equation}
    \sphere = \{\Omega \in \R^3: \; \Omega
              = \begin{pmatrix}
                  \sqrt{1-\mu^2} \cos \varphi \\
                  \sqrt{1-\mu^2} \sin \varphi \\
                  \mu \\
                \end{pmatrix}, \; \varphi\in [0,2\pi], \; \mu\in [-1,1]\} \;.
\end{equation}
Using these coordinates and the associated Legendre polynomials gives rise
to
\begin{defs}
    \label{apx_sph:eq:spherical_harmonics_definition}
    For all $n\in\N_0$ and $m\in \{-n,\ldots,n\}$ the function
    \begin{equation}
        \begin{split}
            \sph^m_n: \; &\sphere \to \C \\
            & (\varphi,\mu) \mapsto (-1)^m
            \sqrt{\frac{2n+1}{4\pi}\frac{(n-m)!}{(n+m)!}}P^{m}_{n}(\mu) e^{im\varphi}
        \end{split}
    \end{equation}
    is called a spherical harmonic function of order $n$ and degree $m$, where
    the associated Legendre polynomials of order $n$ and degree $m$ are defined from the Legendre polynomials $P_n$ as
    \begin{equation}
        P^m_n(\mu) = (1-\mu^2)^\frac{m}{2}\frac{d^m}{d\mu^m}P_n(\mu) \qquad n\in \N_0,\; m=0,\ldots,n\;.
    \end{equation}
\end{defs}

If it is clear from the context, the dependence on $\Omega$ will be
neglected and we write $\sph^m_n = \sph^m_n(\Omega)$. For indices
$m\notin \{-n,\ldots,n\}$ the spherical harmonics are identically
zero. For the set of all spherical harmonics we introduce
\begin{defs}
    \label{apx_sph:def:vector_of_spherical_harmonics}
    The vector of all spherical harmonics is given by
    \begin{equation}
        \vY(\Omega) = \left(\sph^0_0(\Omega), \sph^{-1}_{1}(\Omega), \sph^{0}_{1}(\Omega), \sph^{1}_{1}(\Omega),\ldots \right)^T \subset l^2\left( L^2(\sphere, \C)\right)\;.
    \end{equation}
    The spherical harmonics up to order $N$ can be represented by
    \begin{equation}
        \vY_N(\Omega) = \left(\sph^0_0(\Omega), \sph^{-1}_{1}(\Omega),
                              \sph^{0}_{1}(\Omega), \ldots, \sph^{N-1}_{N}(\Omega),
                              \sph^{N}_{N}(\Omega) \right)^T\;.
    \end{equation}
\end{defs}
Again, if it is clear what we are referring to we neglect the
dependence on $\Omega$ and simply write $\vY$ and $\vY_N$.

  We use the following properties of spherical harmonics to derive 
moment models.  There is a relation between spherical harmonics and 
their complex conjugated counterpart
\begin{equation}
    \label{apx_sph:eq:conjugated_spherical_harmonics}
    \sph^m_n(\Omega) = (-1)^m \overline{\sph^{-m}_n(\Omega)}\;,
\end{equation}
an addition theorem which leads to a relation between spherical
harmonics and Legendre polynomials
\begin{equation}
    \label{apx_sph:eq:addition_theorem_spherical_harmonics}
    P_n(\Omega\cdot \Omega') = \frac{4\pi}{2n+1} \sum_{k=-n}^n
    \sph^k_n(\Omega) \; \overline{\sph^k_n(\Omega')} \;,
\end{equation}
and the recursion relations
\begin{equation}
    \label{apx_sph:eq:recursion_relations_spherical_harmonics}
    \begin{split}
        e^{-i\varphi}\sin \theta \; \sph^{m}_{n} &= h_1(m,n) \sph^{m-1}_{n+1} - l_1(m,n) \sph^{m-1}_{n-1}\\
        e^{i\varphi}\sin \theta \; \sph^{m}_{n}  &=-h_2(m,n) \sph^{m+1}_{n+1} + l_2(m,n) \sph^{m+1}_{n-1}\\
        \cos \theta \; \sph^{m}_{n}              &= h_3(m,n) \sph^{m}_{n+1}   + l_3(m,n) \sph^{m}_{n-1}
    \end{split}
\end{equation}
with the coefficients
\begin{align}
    \label{apx_sph:eq:definition_of_recursion_coefficients_for_sph}
    h_1(m,n) = \sqrt{\frac{(n-m+1)(n-m+2)}{(2n+1)(2n+3)} } \quad & \quad l_1(m,n) = \sqrt{\frac{(n+m)(n+m-1)}{(2n-1)(2n+1)} } \notag\\
    h_2(m,n) = \sqrt{\frac{(n+m+1)(n+m+2)}{(2n+1)(2n+3)} } \quad & \quad l_2(m,n) = \sqrt{\frac{(n-m)(n-m-1)}{(2n-1)(2n+1)} }\\
    h_3(m,n) = \sqrt{\frac{(n-m+1)(n+m+1)}{(2n+1)(2n+3)} } \quad & \quad l_3(m,n) = \sqrt{\frac{(n-m)(n+m)}{(2n-1)(2n+1)} } \;. \notag
\end{align}
It is easy to see that for the coefficients in
\eqref{apx_sph:eq:definition_of_recursion_coefficients_for_sph} it holds
\begin{equation}
    \label{apx_sph:eq:sph_coefficient_relations}
    \begin{split}
    h_1(m,n) = h_2(-m,n)\;, \quad & \quad l_1(m,n)=l_2(-m,n)\;,\\
    h_3(m,n) = h_3(-m,n)\;, \quad & \quad l_3(m,n)=l_3(-m,n)\;.
    \end{split}
\end{equation}
Using the given recursion relations leads for any vector $\Omega$ on the unit
sphere to
\begin{equation}
    \label{apx_sph:eq:direction_and_sph}
    \Omega \sph^m_n=
    \begin{pmatrix}
      \frac{1}{2} \left( h_1(m,n) \sph^{m-1}_{n+1} - h_2(m,n) \sph^{m+1}_{n+1} - l_1(m,n) \sph^{m-1}_{n-1} + l_2(m,n) \sph^{m+1}_{n-1}\right)\\
      \frac{i}{2} \left( h_1(m,n) \sph^{m-1}_{n+1} + h_2(m,n) \sph^{m+1}_{n+1} - l_1(m,n) \sph^{m-1}_{n-1} - l_2(m,n) \sph^{m+1}_{n-1}\right) \\
      h_3(m,n) \sph^{m}_{n+1} + l_3(m,n) \sph^{m}_{n-1}
    \end{pmatrix}.
\end{equation}

To express the relations \eqref{apx_sph:eq:direction_and_sph} as
matrix--vector multiplications we introduce some new matrices. For
$j\in\{1,2,3\}$ the matrices $L_{x_j}^k \in \R^{2k+1} \times
\R^{2k-1}$ and $U_{x_j}^k \in \R^{2k-1} \times \R^{2k+1}$ are
defined as
\begin{subequations}
    \label{apx_sph:eq:system_submatrices_L}
    \begin{align}
        \label{apx_sph:eq:system_submatrices_L_x1}
        \left(L_{x_1}^k\right)_{(r,s)} &=
            \begin{cases}
                -l_1(r-k-1,k) & \text{for} \;  r=s+2 \\
                 l_2(r-k-1,k)  & \text{for} \; r=s \\
                 0      & \text{otherwise}
            \end{cases} \;, \\
        \label{apx_sph:eq:system_submatrices_L_x2}
        \left(L_{x_2}^k\right)_{(r,s)} &=
            \begin{cases}
                 -l_1(r-k-1,k) & \text{for} \; r=s+2 \\
                 -l_2(r-k-1,k) & \text{for} \; r=s \\
                 0     & \text{otherwise}
            \end{cases} \;, \\
        \label{apx_sph:eq:system_submatrices_L_x3}
        \left(L_{x_3}^k\right)_{(r,s)} &=
            \begin{cases}
                 l_3(r-k-1,k)   & \text{for} \; r=s+1 \\
                 0     & \text{otherwise}
            \end{cases}
    \end{align}
\end{subequations}
and
\begin{subequations}
    \label{apx_sph:eq:system_submatrices_H}
    \begin{align}
        \label{apx_sph:eq:system_submatrices_H_x1}
        \left(U_{x_1}^k\right)_{(r,s)} &=
            \begin{cases}
                 h_1(r-k,k-1) & \text{for} \; r=s \\
                -h_2(r-k,k-1) & \text{for} \; r=s-2 \\
                 0            & \text{otherwise}
            \end{cases} \;, \\
        \label{apx_sph:eq:system_submatrices_H_x2}
        \left(U_{x_2}^k\right)_{(r,s)} &=
            \begin{cases}
                h_1(r-k,k-1) & \text{for} \; r=s \\
                h_2(r-k,k-1) & \text{for} \; r=s-2 \\
                 0            & \text{otherwise}
            \end{cases} \;, \\
        \label{apx_sph:eq:system_submatrices_H_x3}
        \left(U_{x_3}^k\right)_{(r,s)} &=
            \begin{cases}
                 h_3(r-k,k-1) & \text{for} \; r=s-1 \\
                 0            & \text{otherwise}
            \end{cases} \;.
    \end{align}
\end{subequations}
We used the definitions of $l_i(\cdot,\cdot)$ and $h_i(\cdot,\cdot)$
from
\eqref{apx_sph:eq:definition_of_recursion_coefficients_for_sph}. The
resulting matrices have only two (or for $L_{x_3}^k$ and $U_{x_3}^k$
only one) diagonals that do not vanish. Their structure is
\begin{equation}
    \begin{split}
        & U_{x_1}^k =
            \begin{pmatrix}
              h_1 & 0 & -h_2 &  &  \\
                & h_1 & 0 & -h_2 &  \\
               &  & \ddots & \ddots & \ddots \\
            \end{pmatrix} \qquad
        U_{x_2}^k =
            \begin{pmatrix}
              h_1 & 0 & h_2 &  &  \\
                & h_1 & 0 & h_2 &  \\
               &  & \ddots & \ddots & \ddots \\
            \end{pmatrix} \\
        & \qquad \qquad \qquad \qquad \qquad
        U_{x_3}^k =
            \begin{pmatrix}
              0 & h_3 & 0 &  &  \\
                & 0 & h_3 & 0 &  \\
               &  & \ddots & \ddots & \ddots \\
            \end{pmatrix}
    \end{split}
\end{equation}
\begin{equation}
    L_{x_1}^k =
        \begin{pmatrix}
          l_2 &  &  \\
          0 & l_2 &  \\
          -l_1 & 0 & \ddots \\
           & -l_1 & \ddots \\
           &  & \ddots \\
        \end{pmatrix} \qquad
    L_{x_2}^k =
        \begin{pmatrix}
          -l_2 &  &  \\
          0 & -l_2 &  \\
          -l_1 & 0 & \ddots \\
           & -l_1 & \ddots \\
           &  & \ddots \\
        \end{pmatrix} \qquad
    L_{x_3}^k =
        \begin{pmatrix}
          0 &  &  \\
          l_3 & 0 &  \\
          0 & l_3 & \ddots \\
           & 0 & \ddots \\
           &  & \ddots \\
        \end{pmatrix}
\end{equation}
In this representation we neglected the dependence of the
coefficients $l_1, h_1, \ldots$ on the order of the spherical
harmonics. These relations can be found in
\eqref{apx_sph:eq:system_submatrices_L} and
\eqref{apx_sph:eq:system_submatrices_H}.

The matrices can be combined to larger matrices $\hat
L_{x_j}^N$ and $\hat U_{x_j}^N$ in the following way
\begin{equation}
    \label{apx_sph:eq:structure_complex_system_matrices_subparts}
    \hat L_{x_j}^N =
    \begin{pmatrix}
      0         &           &        &           &  \\
      L_{x_j}^1 & 0         &        &           &  \\
                & L_{x_j}^2 & \ddots &           &  \\
                &           &        & \ddots    &  \\
                &           &        & L_{x_j}^N & 0\\
    \end{pmatrix}
    \quad \text{and} \quad
    \hat U_{x_j}^N =
    \begin{pmatrix}
      0         & U_{x_j}^1 &           &        &  \\
                &           & U_{x_j}^2 &        &  \\
                &           & \ddots    &        &  \\
                &           &           & \ddots & U_{x_j}^N \\
                &           &           &        & 0 \\
    \end{pmatrix} \;.
\end{equation}
Additionally we introduce the factor
\begin{equation}
    \gamma_j =
    \begin{cases}
        \frac{1}{2} & \text{for} \quad j = 1 \\
        \frac{i}{2} & \text{for} \quad j = 2 \\
        1 & \text{for} \quad j = 3
    \end{cases}
\end{equation}
and the unit vector
\begin{equation}
    \R^{(N+1)^2} \ni \ve^{N}_{(m,n)} = (0,\ldots,0,1,0,\ldots,0)^T
\end{equation}
which is one only in the component that is related to the spherical
harmonic $\sph^m_n$ in the vector $\vY_N$ such that
$\left(\ve^{N}_{(m,n)} \right)^T \vY_N =\sph^m_n$. Then, for
$n\in\{0,\ldots,N\}$ and $m\in\{-n,\ldots,n\}$, we can write
\begin{equation}
    \label{apx_sph:eq:omega_times_sph_matrix_notation}
    \Omega_j \sph^m_n = \gamma_j \left( \left(\ve^{N}_{(m,n)}\right)^T \hat L_{x_j}^{N} \vY_N + \left(\ve^{N+1}_{(m,n)}\right)^T \hat U_{x_j}^{N+1} \vY_{N+1}\right) \;.
\end{equation}
\begin{lem}
    \label{apx_sph:lem:symmetrie_properties_submatrices_sph}
    For the matrices $\hat L_{x_j}^N$ and $\hat U_{x_j}^N$ hold the
    relations
    \begin{equation}
        \left(  \hat L_{x_1}^N  \right)^T = \hat U_{x_1}^N \;, \qquad
        \left(  \hat L_{x_2}^N  \right)^T = -\hat U_{x_2}^N \;, \qquad
        \left(  \hat L_{x_1}^N  \right)^T = \hat U_{x_1}^N \;.
    \end{equation}
\end{lem}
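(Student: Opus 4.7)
The plan is to reduce the three identities to entry-level statements about the small blocks $L_{x_j}^k$ and $U_{x_j}^k$, using the fact that the large matrices $\hat L_{x_j}^N$ and $\hat U_{x_j}^N$ are block bidiagonal and placed on complementary sides of the diagonal. Concretely, I would start by noting from \eqref{apx_sph:eq:structure_complex_system_matrices_subparts} that $\hat L_{x_j}^N$ has the block $L_{x_j}^k$ in the block-position $(k,k-1)$ (strictly below the diagonal), whereas $\hat U_{x_j}^N$ has $U_{x_j}^k$ in the block-position $(k-1,k)$ (strictly above the diagonal). Transposing $\hat L_{x_j}^N$ moves each block $L_{x_j}^k$ to block-position $(k-1,k)$ and replaces it by $(L_{x_j}^k)^T$, so the claim reduces to showing
\begin{equation*}
    \bigl(L_{x_1}^k\bigr)^T = U_{x_1}^k, \qquad
    \bigl(L_{x_2}^k\bigr)^T = -\,U_{x_2}^k, \qquad
    \bigl(L_{x_3}^k\bigr)^T = U_{x_3}^k
\end{equation*}
for every $k\in\{1,\dots,N\}$.

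Next I would verify these three block identities by comparing nonzero entries in \eqref{apx_sph:eq:system_submatrices_L} with those in \eqref{apx_sph:eq:system_submatrices_H}. For $j=3$ the matrices $L_{x_3}^k$ and $U_{x_3}^k$ each have a single nonzero diagonal with entries $l_3(r-k-1,k)$ on the $(r,s=r-1)$ positions of $L_{x_3}^k$ and $h_3(r-k,k-1)$ on the $(r,s=r+1)$ positions of $U_{x_3}^k$; after transposition these diagonals match, and the needed scalar equality reduces to $l_3(m,k)=h_3(m,k-1)$, which follows immediately from the definitions in \eqref{apx_sph:eq:definition_of_recursion_coefficients_for_sph}. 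For $j=1$ and $j=2$ the blocks have two nonzero diagonals, involving $l_1,l_2$ versus $h_1,h_2$, and the same strategy applies: after transposing, the $(r,s+2)$-diagonal of $L_{x_j}^k$ lands on the $(s,r)$-positions, matching the $(r,s-2)$-diagonal of $U_{x_j}^k$, and the diagonal entries $l_2$ have to match $h_1$. All of these pair up via the two identities
\begin{equation*}
    l_1(m,k) = h_2(m-1,k-1), \qquad l_2(m,k) = h_1(m+1,k-1),
\end{equation*}
which one verifies directly from \eqref{apx_sph:eq:definition_of_recursion_coefficients_for_sph}: each is a routine algebraic rewriting of the square-root expression. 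The signs in \eqref{apx_sph:eq:system_submatrices_L_x1}--\eqref{apx_sph:eq:system_submatrices_H_x2} then give $+U_{x_1}^k$ in the first case ($-l_1,+l_2$ versus $-h_2,+h_1$) and $-U_{x_2}^k$ in the second ($-l_1,-l_2$ versus $+h_2,+h_1$), completing the block comparison.

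The only real obstacle is bookkeeping: one must carefully align the different indexing conventions used for the two families (\eqref{apx_sph:eq:system_submatrices_L} labels a block $L_{x_j}^k$ using arguments $(r-k-1,k)$, whereas \eqref{apx_sph:eq:system_submatrices_H} labels $U_{x_j}^k$ using $(r-k,k-1)$), and keep track of how a nonzero position $(r,s)$ in $L_{x_j}^k$ becomes $(s,r)$ in the transpose, so that the argument $m$ of $l_\bullet(m,k)$ is correctly translated into the argument of $h_\bullet(\cdot,k-1)$. Once this alignment is written out, the proof is just an entry-by-entry comparison, with the coefficient identities above as the only nontrivial algebraic input. Finally, I would correct the obvious typo in the statement: the third equality should read $(\hat L_{x_3}^N)^T = \hat U_{x_3}^N$, and this is precisely what the argument above proves.
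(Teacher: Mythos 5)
Your proof is correct. The paper actually states this lemma without giving any proof at all, so there is nothing to compare against; your reduction to the block identities $(L_{x_1}^k)^T = U_{x_1}^k$, $(L_{x_2}^k)^T = -U_{x_2}^k$, $(L_{x_3}^k)^T = U_{x_3}^k$, verified via the coefficient identities $l_1(m,k)=h_2(m-1,k-1)$, $l_2(m,k)=h_1(m+1,k-1)$, $l_3(m,k)=h_3(m,k-1)$ (all immediate from \eqref{apx_sph:eq:definition_of_recursion_coefficients_for_sph}), is the natural and complete argument, and your observation that the third displayed relation is a typo for $(\hat L_{x_3}^N)^T=\hat U_{x_3}^N$ is also correct.
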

\section[Treatment of Scattering and Absorption]{Treatment of Scattering and Absorption operators in moment methods}
\label{apx:B}
The radiative transfer equation contains a scattering
component
\begin{equation}
    (\oS \I)(t,x,\Omega) 
    = \frac{\scc}{4\pi} \int_{S^2} \Phi(x,\Omega \cdot\Omega ') 
                                   \I(t,x,\Omega ') \, d\Omega'
\end{equation}
with a normalized scattering kernel $\Phi$.  This scattering kernel can be 
rather complicated.  Several theories have been developed to approximate 
realistic kernels. The first who established an approach was 
Mie \cite{Stratton1941}.  However, due to the complexity of his theory, 
several simplified approximations have been developed, e.g.\ the 
Henyey-Greenstein (HG) kernel \cite{Hulst1980} or the SAM 
(simplified approximate Mie) approach \cite{Liu94}.

To deal with the scattering kernel in moment methods, it is very
common to rewrite it into a series expansion based on Legendre
polynomials
\begin{equation}
    \Phi(x,\mu) = \sum_{j=0}^{\infty} \frac{2j+1}{2} \scc_j(x) P_j(\mu) \;.
\end{equation}
This leads to
\begin{equation}
    \oS I(\Omega)
    = \frac{\scc}{2} \sum_{l=0}^\infty
                     \sum_{k=-l}^l \scc_l \vI^k_l \sph^k_l(\Omega)
\end{equation}
and
\begin{equation}
    \mnt^r_s \oS \I(\Omega)
     = \frac{\scc}{2} \sum_{l=0}^\infty
                      \sum_{k=-l}^l \scc_l
                          \vI^k_l \left(\mnt^r_s\sph^k_l(\Omega)\right)
          = \frac{\scc}{2} \sum_{l=0}^\infty
                           \sum_{k=-l}^l \scc_l \vI^k_l \delta^k_r \delta^l_s
          = \frac{\scc}{2} \scc_s \vI^r_s \;.
\end{equation}
For the total absorption and scattering operator we get
\begin{equation}
        (\oK \I)(\Omega)
          = \sum_{l=0}^\infty\sum_{k=-l}^l \left(\abc+\scc - \frac{\scc}{2}\scc_l\right) \vI^k_l \sph^k_l(\Omega)
          =: \sum_{l=0}^\infty\sum_{k=-l}^l \tilde\scc_l \vI^k_l \sph^k_l(\Omega)
          \end{equation}
It is obvious that the inverse operator is
\begin{equation}
    (\oK^{-1} \I)(\Omega) = \sum_{l=0}^\infty\sum_{k=-l}^l \frac{1}{\abc+\scc - \frac{\scc}{2}\scc_l} \vI^k_l \sph^k_l(\Omega)\;.
\end{equation}
and thus
\begin{equation}
    \mnt^r_s(\oK^{-1} \I)(\Omega) = \frac{1}{\abc+\scc - \frac{\scc}{2}\scc_s} \vI^r_s.
    \label{apx_scat:eq:single_moment_absorb_and_scattering}
\end{equation}

\end{document}